\theoremstyle{plain}
\newtheorem{thm}{Theorem}
\newtheorem{cor}{Corollary}
\newtheorem{lem}{Lemma}
\theoremstyle{remark}
\newcommand{\de}{\, \mathrm{d}}
\newcommand{\asterisknum}{\addtocounter{equation}{1} \tag{\theequation}}
\DeclareMathOperator{\Tr}{Tr}
\DeclareMathOperator{\e}{e}
\DeclareMathOperator{\Li}{Li}
\numberwithin{equation}{section}
\newcommand{\Rr}{\mathbb{R}}
\newcommand{\Nn}{\mathbb{N}}
\newcommand{\Zz}{\mathbb{Z}}
\newcommand{\Cc}{\mathbb{C}}
\title{The free energy of the two-dimensional dilute Bose gas. II. Upper bound}
\author{Simon Mayer\thanks{\texttt{simon.mayer@ist.ac.at}} \ , \quad Robert Seiringer\thanks{\texttt{robert.seiringer@ist.ac.at}}\\[1ex]
	Institute of Science and Technology Austria (IST Austria)\\
	Am Campus 1, 3400 Klosterneuburg, Austria}
\date{May 19, 2020}
\begin{document}

\maketitle

\begin{abstract}
	We prove an upper bound on the free energy  of a two-dimensional homogeneous Bose gas in the thermodynamic limit. We show that for $a^2 \rho \ll 1$ and  $\beta \rho \gtrsim 1$ the free energy per unit volume differs from the one of the non-interacting system by at most $4 \pi \rho^2 |\ln a^2 \rho|^{-1} (2 - [1 - \beta_{\mathrm{c}}/\beta]_+^2)$ to leading order, where $a$ is the scattering length of the two-body interaction potential, $\rho$ is the density, $\beta$ the inverse temperature and $\beta_{\mathrm{c}}$ is the inverse  Berezinskii--Kosterlitz--Thouless critical temperature for superfluidity. In combination with the corresponding matching lower bound proved in \cite{DMS19} 
this shows equality in the asymptotic expansion.
\end{abstract}



\section{Introduction and main result}

\subsection{Introduction}

The first experimental observation of Bose--Einstein condensation in dilute alkali gases \cite{AEMWC95,DMAvDDKK95}, with the subsequent advances and activities in experimental and theoretical physics, has led to  renewed interest in the mathematical aspects of interacting Bose and Fermi gases. For an overview of some of the rigorous results on Bose gases obtained in recent years,  see \cite{FS19,BECbook,rougerie}.
The present article is a sequel to \cite{DMS19}, to which we refer for an extended introduction on the topic of the dilute Bose gas and further recent results and references.

We shall investigate  the free energy  of a dilute Bose gas in the thermodynamic limit at positive temperature. Recall that in three spatial dimensions, the free energy (per unit volume) as a function of the inverse temperature $\beta = 1/T$ and the particle density $\rho$ satisfies the asymptotic identity
\begin{equation} \label{eqn-3d-free-energy-asymptotics-ub}
f^{\text{3D}}(\beta,\rho)  = f_0^\text{3D}(\beta,\rho) + 4 \pi a \rho^2 \left( 2 - \left[ 1 - \left( \frac{\beta_\text{c}^\text{3D}(\rho)}{\beta} \right)^{3/2} \right]_+^2 \right) (1 + o(1))
\end{equation}
as $a^3 \rho \to 0$, where $f_0^\text{3D}(\beta,\rho)$ is the free energy density of an  ideal Bose gas, $a\geq 0$ is the scattering length of the interaction potential, $[\, \cdot \,]_+ = \max\{0,\cdot \}$ denotes the positive part and $\beta_\text{c}^\text{3D}(\rho) = \zeta(3/2)^{2/3} / (4\pi \rho^{2/3})$ is the inverse critical temperature for Bose--Einstein condensation (of the ideal Bose gas). The  proof of  \eqref{eqn-3d-free-energy-asymptotics-ub} was given in \cite{Seiringer2008} (lower bound) and \cite{Yin2010} (upper bound). The formula is valid in the regime $\beta \rho^{2/3} \gtrsim 1$ or, in other words, if the temperature $\beta^{-1}$ is of the order of the critical temperature of the ideal gas, or smaller. 

The main goal of this article is to complete the analysis for the (first two terms of the) free energy asymptotics of the Bose gas in \emph{two} spatial dimensions. We shall prove the upper bound
\begin{equation} \label{eqn-2d-free-energy-asymptotics-ub}
f^{\text{2D}}(\beta, \rho) \leq f_0^\text{2D}(\beta,\rho) + \frac{4 \pi \rho^2}{|\ln a^2 \rho|} \left( 2 - \left[ 1 - \frac{\beta_\text{c}^\text{2D}(\rho,a)}{\beta} \right]_+^2 \right) (1 + o(1))
\end{equation}
as $a^2 \rho \to 0$, where $\beta_\text{c}^\text{2D}(\rho,a)$ is the inverse Berezinskii--Kosterlitz--Thouless critical temperature for superfluidity  \cite{B71,B72,K74,KT73}, given by
\begin{equation} \label{eqn-def-beta-c}
\beta_\text{c}^\text{2D}(\rho,a) = \frac{\ln |\ln a^2 \rho|}{4 \pi \rho}.
\end{equation}
In combination with the corresponding lower bound proved in \cite[Theorem~1]{DMS19} we deduce that \eqref{eqn-2d-free-energy-asymptotics-ub} is actually an equality.

At first sight  \eqref{eqn-2d-free-energy-asymptotics-ub} and \eqref{eqn-3d-free-energy-asymptotics-ub} look  similar, but there are two important differences. The first one is the inverse of the logarithmic factor $|\ln a^2 \rho|$ appearing as a prefactor in the second term, which is particular to the two-dimensional system and already known from the asymptotics of the ground state energy  \cite{LY2001,Schick71}.  The second one concerns  the  inverse critical temperature $\beta_{\mathrm{c}}^\text{2D}(\rho,a)$, which in two dimensions depends  on the interaction  via its scattering length, and diverges in the dilute limit $a^2\rho \to 0$, which is not the case in three dimensions. Recall that the Mermin--Wagner--Hohenberg theorem \cite{Hohenberg67,MerminWagner66} forbids Bose--Einstein condensation at positive temperature in  two-dimensional systems, hence their behavior can be expected to be rather different from their three-dimensional analogues. 
These  differences are among the reasons why proving the free energy asymptotics in two dimensions is not merely a simple extension of the three-dimensional case. 

In the remainder of this section, we define the free energy in the thermodynamic limit, recall some facts about the ideal Bose gas,  and state our main result, Theorem~\ref{thm-ub}. Since in the following we will exclusively  deal with the two-dimensional system, we will omit the superscript ``2D'' in  the  free energies $f^\text{2D}$ and $f_0^\text{2D}$ and in the inverse critical temperature $\beta_{\mathrm{c}}^\text{2D}(\rho,a)$.

\subsection{The model} \label{subsec-model}

We consider the Hamiltonian for $N$ interacting bosons in a two-dimensional flat torus $\Lambda$, given by
\begin{equation}  \label{eqn-def-N-body-Hamiltonian}
H_N = \sum_{i=1}^N -\Delta_i + \sum_{i<j}^N v(d(x_i,x_j)),
\end{equation}
where $\Delta$ is the Laplacian on $\Lambda$, $d(x,y)$ is the distance function on the torus and $v \geq 0$ is a nonnegative two-body potential with finite scattering length $a>0$. We assume that $v$ is a measurable function that is allowed to take the value $+\infty$, which is appropriate to model hard disks. For a definition of the scattering length, we refer to \cite{DMS19,BECbook,LY2001} or to Sec.~\ref{ss:scatt} below. Having a finite scattering length is known to be equivalent to $v(|x|)(\ln|x|)^2$ being integrable outside a ball, see \cite{Seiringer12}.

The Hamiltonian $H_N$ acts on the Hilbert space $\mathcal{H}_N$, the symmetric tensor product of square integrable functions on the torus,
\begin{equation}\label{def:HN}
\mathcal{H}_N = \bigotimes_{\text{sym}}^N L^2(\Lambda).
\end{equation}
As a concrete realization of $\Lambda$ we shall  use  the square of side length $L$ embedded into the plane $\Rr^2$ with opposite sides identified, in which case $\Delta$ is the Laplacian on $\Lambda = [0,L]^2$ with periodic boundary conditions. The distance is then 
\begin{equation}
d(x,y) = \min_{k \in \Zz^2} | x - y - kL |.
\end{equation}

At inverse temperature $\beta = 1/T$ and average  particle density $\rho$, the free energy per unit volume is defined as
\begin{equation} \label{eqn-def-free-energy}
f(\beta,\rho) = - \frac 1 \beta \lim_{\substack{N,L \to \infty\\ N/L^2 = \rho}} \frac{1}{L^2} \ln \Tr_{\mathcal{H}_N} \e^{-\beta H_N},
\end{equation}
where the limit is the usual thermodynamic limit of large volume and large particle number with fixed density $\rho$. For a proof of the existence of the   limit in \eqref{eqn-def-free-energy} we refer to  \cite{Robinson71,Ruelle69}. We will give an upper bound on the free energy that is asymptotically exact in the dilute limit  where  $a^2 \rho$ is small while the dimensionless parameter $\beta \rho$ is of order one or larger. In physical terms, this means that the scattering length is small compared to the average particle distance,  while  the thermal wave length is of the same order as the average particle distance or larger.

For an ideal, i.e., non-interacting Bose gas in two dimensions,  the free energy is explicitly given by 
\begin{equation}
f_0(\beta,\rho) =  \frac{\rho}{\beta } \ln \left( 1 - \e^{- 4 \pi \beta \rho} \right) - \frac{1}{4 \pi \beta ^2} \Li_2 \left( 1 - \e^{-4\pi \beta \rho} \right) ,
\end{equation}
where
\begin{equation}
\Li_2(z) = - \int_0^z \frac{\ln(1-t)}{t} \de t
\end{equation}
is the polylogarithm of order two. It satisfies  the scaling relation $f_0(\beta,\rho) = \rho^2 f_0(\beta \rho,1)$. For later use, we  also recall the chemical potential
\begin{equation}
\mu_0(\beta,\rho) = \frac \partial{\partial\rho} f_0(\beta,\rho) = \frac{1}{\beta} \ln \left( 1 - \e^{- 4 \pi \beta \rho} \right).
\end{equation}

\subsection{Main theorem} \label{subsec-thm}

Our main result is an asymptotic upper bound on the free energy of the interacting system in terms of the free energy of ideal bosons and a correction term originating from the interaction, in case $a^2 \rho$ is small and $\beta \rho$ is fixed or large. This is the two-dimensional analogue of  \cite[Thm~1]{Yin2010}. We use the standard notation $x \lesssim y$ if there exists a constant $C>0$ such that $x \leq C y$ (and analogously for ``$\gtrsim$'').

Let $\rho_{\mathrm{s}}$ denote the superfluid density
\begin{equation} \label{eqn-rho_s-upper-bound}
\rho_\mathrm{s} = \rho \left[ 1 - \frac{\beta_{\mathrm{c}}(\rho,a)}{\beta} \right]_+
\end{equation}
with the inverse critical temperature $\beta_\mathrm{c}(\rho,a)$ defined in \eqref{eqn-def-beta-c}.

\begin{thm}[Upper bound on the free energy of the two-dimensional dilute Bose gas] \label{thm-ub}
	Assume that the interaction potential $v$ is nonnegative and has a finite scattering length $a$. In the limit $a^2 \rho \to 0$ with $\beta \rho \gtrsim 1$  fixed or large, we have
	\begin{equation} \label{eqn-thm-ub}
	f(\beta,\rho) \leq f_0(\beta,\rho-\rho_{\mathrm{s}}) + \frac{4 \pi }{|\ln a^2 \rho |} \left( 2 \rho^2  -  \rho_{\mathrm{s}}^2 \right) (1 + o(1))
	\end{equation}
	with
	\begin{equation}\label{o1}
	o(1) \lesssim \frac{\ \ln |\ln a^2 \rho |}{ |\ln a^2 \rho|}.
	\end{equation}
\end{thm}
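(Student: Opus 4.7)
The proof follows a Gibbs variational route: for any admissible density matrix $\Gamma$ with $\Tr\Gamma=1$,
\[
-\frac{1}{\beta}\ln\Tr_{\mathcal{H}_N}\e^{-\beta H_N}\le \Tr(H_N\Gamma)+\frac{1}{\beta}\Tr(\Gamma\ln\Gamma),
\]
so it suffices to exhibit a trial state whose (normalized) right-hand side reproduces \eqref{eqn-thm-ub} in the thermodynamic limit. In the ``warm'' regime $\beta<\beta_{\mathrm{c}}$ one has $\rho_{\mathrm{s}}=0$ and a simple dressing of the ideal Gibbs state suffices; the ``cold'' regime $\beta>\beta_{\mathrm{c}}$, where $\rho_{\mathrm{s}}>0$ has to be installed despite the Mermin--Wagner obstruction to genuine BEC, carries the main difficulty.

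Inspired by Yin's three-dimensional construction \cite{Yin2010}, I would take $\Gamma\propto F\,\Gamma_0\,F$ with $F=\prod_{i<j}f(d(x_i,x_j))$, where $f$ is the $s$-wave zero-energy scattering solution of $-2\Delta f+vf=0$ truncated at an intermediate radius $b$, so that $f(r)=\ln(r/a)/\ln(b/a)$ on $a\le r\le b$ and $f\equiv 1$ for $r\ge b$. The reference state $\Gamma_0$ is designed to realize a ``condensate'' of density $\rho_{\mathrm{s}}$: the torus is partitioned into mesoscopic boxes of some side $\ell\ll L$, a coherent amplitude of squared modulus $\rho_{\mathrm{s}}$ is placed on the constant mode of each box, and the remaining modes are populated thermally so that their total density equals $\rho-\rho_{\mathrm{s}}$. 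In the regime $\beta<\beta_{\mathrm{c}}$ one simply takes $\Gamma_0$ to be the ideal Gibbs state at density $\rho$.

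The energy evaluation relies on two standard facts. The kinetic energy of $F\,\Gamma_0\,F$ equals that of $\Gamma_0$ plus a correction $\int|\nabla f|^2\,\rho_2^{(0)}$, while the potential energy is $\tfrac12\int vf^2\,\rho_2^{(0)}$; summing and using the two-dimensional scattering identity $\int(|\nabla f|^2+\tfrac12 vf^2)\,\de x=2\pi/\ln(b/a)$ collapses the interaction contribution to an effective two-body coupling $4\pi/\ln(b/a)$ times the short-distance pair density. Wick's theorem for the quasi-free component of $\Gamma_0$ gives, at coincident points,
\[
\rho_2^{(0)}(x,x)=\rho_{\mathrm{s}}^2+4\rho_{\mathrm{s}}(\rho-\rho_{\mathrm{s}})+2(\rho-\rho_{\mathrm{s}})^2=2\rho^2-\rho_{\mathrm{s}}^2,
\]
where the factor $2$ is the bosonic bunching of thermal modes and the factor $4$ counts the Wick contractions pairing two coherent fields with two thermal fields. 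Choosing $b\sim\rho^{-1/2}$ converts $\ln(b/a)$ into $\tfrac12|\ln a^2\rho|(1+o(1))$, and the entropy of $F\,\Gamma_0\,F$ is shown to differ from that of $\Gamma_0$ only by a subleading normalization correction; combining the two pieces yields the right-hand side of \eqref{eqn-thm-ub}.

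The main technical difficulty lies in controlling the trial state precisely in the cold regime. The coherent amplitude for $\rho_{\mathrm{s}}$ must be installed only on mesoscopic boxes, which introduces a kinetic localization cost of order $\ell^{-2}$ per box, inter-box correlations ignored by $F$ once $b\gtrsim\ell$, and particle-number fluctuations that deform the target density. Moreover, because in 2D the Jastrow deviation $1-f$ decays only logarithmically in $b/a$, the higher-order terms in the expansion of $F\,\Gamma_0\,F$ (three-body and beyond) are not obviously smaller than the two-body ones and must be bounded by a Dyson-type or cluster-expansion argument. Balancing $b$, $\ell$, and the occupation-statistics error against each other is what ultimately produces the $\ln|\ln a^2\rho|/|\ln a^2\rho|$ remainder in \eqref{o1}.
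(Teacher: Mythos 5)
Your high-level strategy does coincide with the paper's: Gibbs variational principle on a box, trial state consisting of a thermal quasi-free state at density $\rho-\rho_{\mathrm{s}}$ dressed with a coherent $p=0$ amplitude carrying $\rho_{\mathrm{s}}$, a Jastrow factor built from the truncated zero-energy scattering solution, Wick's theorem yielding the factor $2\rho^2-\rho_{\mathrm{s}}^2$, and the scattering identity turning the short-distance kinetic plus potential energy into $4\pi/\ln(R/a)$. The paper's main correction term is obtained in exactly this way. (Incidentally, you attribute the construction to Yin's 3D proof; the paper explicitly notes it follows a \emph{different}, simpler route than \cite{Yin2010}, though your description is in fact closer to the present paper than to Yin's.)

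There is, however, a concrete gap in the proposed parameter regime. You take $b\sim\rho^{-1/2}$ for the Jastrow cutoff. With $n\sim\rho\ell^2\gg 1$ particles per box, the norm deficit of the Jastrow-dressed eigenfunctions is controlled by the integral of $1-g^2$ against the two-particle density, which scales like $n^2 R^2/\ell^2=(R^2\rho)(\rho\ell^2)$ (this is the quantity $B_1^{-1}$ in Lemma~\ref{lem-norm-estimate}). With $R\sim b\sim\rho^{-1/2}$ this is $\sim\rho\ell^2=n\gg1$, so the norms $\|f D_z\psi_\alpha\|^2$ are not bounded away from zero, the normalization of the trial state blows up, and the entropy bound of Lemma~\ref{lem-entropy-estimate} becomes useless. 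One is forced to take $R$ far below the interparticle spacing --- the paper chooses $R^2\rho\sim\sigma^{-7}$ with $\sigma=|\ln a^2\rho|$ --- and it is precisely the resulting deviation of $\ln(R/a)$ from $\tfrac12\sigma$, of relative size $|\ln R^2\rho|/\sigma\sim\ln\sigma/\sigma$, that produces the error bound \eqref{o1}. Your claim that $b\sim\rho^{-1/2}$ converts $\ln(b/a)$ into $\tfrac12\sigma(1+o(1))$ with the normalization correction ``subleading'' therefore cannot be sustained as stated. Two further points: first, the paper defines the trial state as $\sum_\alpha\tilde\lambda_\alpha\,|fD_z\psi_\alpha\rangle\langle fD_z\psi_\alpha|/\|fD_z\psi_\alpha\|^2$ with a cutoff $N_\alpha<\mathcal{N}$ on the particle number, rather than the global $F\Gamma_0F/\Tr(F\Gamma_0F)$; the eigenfunction-wise normalization combined with the cutoff is what makes the entropy estimate clean and the norm deficit controllable uniformly in $\alpha$. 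Second, the three-body (off-diagonal) term in $|\nabla F|^2$ does not require a Dyson or cluster-type expansion: since $f\le1$, it is bounded directly by the three-particle density of the quasi-free reference state and part~3 of Lemma~\ref{lem-g'}, giving a contribution $\lesssim\rho^3R^2/[\ln(R/a)]^2$, which is benign for the small $R$ above.
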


We emphasize that \eqref{o1} is uniform in $\beta \rho$ for $\beta\rho \gtrsim 1$. Its order of magnitude agrees with the expected next order term at zero temperature \cite{And,Fou,Mor}, although with the wrong sign.  Moreover, the error term depends on the interaction potential $v$ only through its scattering length $a$, except for an additional error term of the form 
\begin{equation}
\frac 1 {|\ln a^2\rho|^2}  \int_{|x|\geq a (C a^2 \rho)^{-1/2} |\ln a^2\rho|^{-7/2}} v(|x|) [\ln(|x|/a)]^2 \de x
\end{equation}
for some $C>0$. This term is negligible compared to the main error term of the order $\ln|\ln a^2\rho|/ |\ln a^2\rho|^2$, but is non-uniform in $v$ and cannot be estimated solely in terms of $a$. 

The proof of Theorem~\ref{thm-ub} follows a very different route than the corresponding result in three dimensions in \cite{Yin2010}. It is in fact much shorter, and in many ways much simpler. Moreover, it works for a larger class of interaction potentials. Our proof strategy would not work in the three-dimensional case, however. This is due to the fact that in three dimensions the main correction term compared to $f_0$ is much smaller than in two dimensions; the relevant small parameter $a\rho^{1/3}$ enters linearly in \eqref{eqn-3d-free-energy-asymptotics-ub}, while in two dimensions $a\rho^{1/2}$ enters only logarithmically. Hence a greater accuracy is required in the analysis of the three-dimensional case, and several estimates employed here would be too crude to achieve this accuracy. 

One readily checks that
\begin{equation}
f_0(\beta,\rho-\rho_{\mathrm{s}})  - f_0(\beta,\rho) =  - \frac{1}{\beta} \int_0^{\rho_{\mathrm{s}}} \ln \left( 1 - \e^{- 4 \pi \beta (\rho - r)} \right) \de r \lesssim  \frac{\rho^2}{{|\ln a^2 \rho |}} \frac 1{\left( \ln |\ln a^2 \rho |\right)^2},
\end{equation}
which is much smaller than the main correction term of order $\rho^2 / |\ln a^2 \rho |$ in \eqref{eqn-thm-ub}, but is much larger than our bound on the $o(1)$ term. Hence our upper bound could also be stated with $f_0(\beta,\rho)$ in place of $f_0(\beta,\rho-\rho_{\mathrm{s}})$ on the right side, but at the expense of a larger error term. 

In combination with the lower bound proved in \cite{DMS19}, Theorem~\ref{thm-ub} establishes equality in the asymptotic expansion of the free energy:

\begin{cor}[Free energy asymptotics of the two-dimensional dilute Bose gas] 
	Assume that the interaction potential $v$ is nonnegative and has a finite scattering length $a$. In the limit $a^2 \rho \to 0$ with $\beta \rho \gtrsim 1$  fixed or large, we have
	\begin{equation} 
	f(\beta,\rho) = f_0(\beta,\rho) + \frac{4 \pi \rho^2}{|\ln a^2 \rho |} \left( 2 - \left[ 1 - \frac{\beta_\mathrm{c}(\rho,a)}{\beta} \right]_+^2 \right) (1 + o(1)),
	\end{equation}
	where
	\begin{equation}
	|o(1)| \lesssim \frac{\ln \ln |\ln a^2 \rho |}{\ln |\ln a^2 \rho|}.
	\end{equation}
	Here, $[ \, \cdot \, ]_+$ denotes the positive part and the inverse critical temperature $\beta_\mathrm{c}(\rho,a)$ is defined in \eqref{eqn-def-beta-c}.
\end{cor}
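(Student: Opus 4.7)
The plan is to sandwich the free energy between Theorem~\ref{thm-ub} (the upper bound proved in this paper) and the matching lower bound of \cite[Theorem~1]{DMS19}. First I would rewrite the correction term in \eqref{eqn-thm-ub}: substituting $\rho_{\mathrm{s}} = \rho [1 - \beta_{\mathrm{c}}(\rho,a)/\beta]_+$ yields the algebraic identity
$$
2\rho^2 - \rho_{\mathrm{s}}^2 = \rho^2\left(2 - \left[1 - \frac{\beta_{\mathrm{c}}(\rho,a)}{\beta}\right]_+^2\right),
$$
so the prefactor from Theorem~\ref{thm-ub} already has exactly the shape appearing in the Corollary.

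Next I would replace $f_0(\beta,\rho - \rho_{\mathrm{s}})$ on the right-hand side of \eqref{eqn-thm-ub} by $f_0(\beta,\rho)$, using the integral bound displayed just after Theorem~\ref{thm-ub}, which gives
$$
0 \leq f_0(\beta,\rho-\rho_{\mathrm{s}}) - f_0(\beta,\rho) \lesssim \frac{\rho^2}{|\ln a^2\rho|}\cdot\frac{1}{(\ln|\ln a^2\rho|)^2}.
$$
Relative to the leading correction of order $\rho^2/|\ln a^2\rho|$, this is a relative error of order $(\ln|\ln a^2\rho|)^{-2}$, which is dominated by the error $\ln|\ln a^2\rho|/|\ln a^2\rho|$ supplied by \eqref{o1} and can therefore be absorbed into it. At this stage Theorem~\ref{thm-ub} has been recast as the claimed upper bound with $f_0(\beta,\rho)$ on the right-hand side and the $o(1)$ rate from \eqref{o1}.

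Finally I would invoke \cite[Theorem~1]{DMS19} for the matching lower bound. Its error rate, $\ln\ln|\ln a^2\rho|/\ln|\ln a^2\rho|$, is considerably weaker than the one obtained above, and hence dominates when the two inequalities are combined into an equality; this weaker rate is precisely the one stated in the Corollary. There is no real obstacle: the Corollary is essentially a bookkeeping consequence of two results proved elsewhere, reducing to an algebraic rewriting, a simple smoothness estimate for $f_0$ across the shift $\rho\mapsto\rho-\rho_{\mathrm{s}}$, and a comparison of the two error rates to determine which dominates.
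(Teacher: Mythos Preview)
Your overall strategy is exactly the paper's: combine Theorem~\ref{thm-ub} with the lower bound of \cite[Theorem~1]{DMS19}, rewrite $2\rho^2-\rho_{\mathrm s}^2$ as $\rho^2(2-[1-\beta_{\mathrm c}/\beta]_+^2)$, and let the weaker error rate from the lower bound dominate. That is the correct approach.

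There is, however, one incorrect intermediate claim. You assert that the relative error $(\ln|\ln a^2\rho|)^{-2}$ coming from the replacement $f_0(\beta,\rho-\rho_{\mathrm s})\to f_0(\beta,\rho)$ is dominated by the rate $\ln|\ln a^2\rho|/|\ln a^2\rho|$ of \eqref{o1}. This is false: with $\sigma=|\ln a^2\rho|$, one has $(\ln\sigma)^{-2}\gg(\ln\sigma)/\sigma$ as $\sigma\to\infty$, so the $f_0$-shift error is \emph{larger} than the upper-bound error, not smaller. The paper itself notes this explicitly in the paragraph following Theorem~\ref{thm-ub} (``at the expense of a larger error term''). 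Your conclusion survives because you only need the $f_0$-shift error to be absorbed into the \emph{Corollary's} rate $\ln\ln\sigma/\ln\sigma$, and indeed $(\ln\sigma)^{-2}\ll\ln\ln\sigma/\ln\sigma$. So the argument should be: after passing to $f_0(\beta,\rho)$ the upper bound carries a relative error of order $(\ln\sigma)^{-2}$, the lower bound from \cite{DMS19} carries $\ln\ln\sigma/\ln\sigma$, and the latter dominates both. With that correction the proof goes through and matches the paper.
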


The bound on the $o(1)$ term stated in \eqref{o1} originates from the lower bound in  \cite[Theorem~1]{DMS19}, the one obtained here in the upper bound is smaller.  

The proof of Theorem~\ref{thm-ub} is given in the next section. It is split into several subsections for better readability, and starts with a brief outline of the strategy.

\section{Proof of the main theorem} \label{sec-proof}

\subsection{Outline of the proof strategy}

We use the Gibbs variational principle for the free energy and insert a suitable trial state into the free energy functional to obtain an upper bound. We partition the square $[0,L]^2$ into $(L/\ell)^2$ smaller boxes of size $\ell - R_0$, separated a suitable distance $R_0$, and construct the trial state as  a tensor product of identical (up to translation) trial states on each smaller box. While $\ell$ will be chosen large when $a^2\rho$ is small, it is independent of $L$. This has the advantage of having a smaller number of particles to deal with, allowing for simpler estimates. On the other hand, it leads to finite-size corrections that need to be estimated. On a small box, we define a trial state as a suitable modification of the Gibbs state for an ideal Bose gas. It will be convenient to use periodic boundary conditions instead of the Dirichlet boundary conditions that naturally arise when confining the particles to the small boxes, and we shall estimate the effect of this change of boundary conditions on the free energy.  In order to obtain the correct interaction energy, we consider  an ideal gas of reduced  density $\rho - \rho_{\mathrm{s}}$, and add a condensate of density $\rho_{\mathrm{s}}$ as a coherent state. Moreover, because of the short-range interaction it is necessary to add a correlation structure via a Jastrow-type  product function \cite{Jastrow1955}, involving the solution to the zero-energy scattering equation defining the scattering length, and we need to estimate its effect on the norms of the eigenfunctions of the state. We then proceed with estimating the energy and  entropy  of the trial state on the small box. A suitable choice of the various parameters leads to the stated bound \eqref{eqn-thm-ub}.

\subsection{Preliminaries}\label{ss:scatt}

In this subsection we present some tools that will be needed in our proof. We first state a lemma for approximating sums by integrals. A second lemma concerns properties of the zero-energy two-body scattering solution,  and finally we discuss the variational definition of the free energy in the canonical and grand canonical setting, as well as the equality of these two definitions in the thermodynamic limit.

The following lemma is a variation of   \cite[Lemma~4]{Seiringer06}. 

\begin{lem} \label{lem-approx-traces} 
	Let $f : \Rr_+ \to \Rr_+$ be a monotone decreasing function. With $-\Delta$ the Laplacian with periodic boundary conditions on $[0,\ell]^2$, we have
	\begin{equation}\label{eql2}
	\frac{\ell^2}{4\pi^2} \int_{\Rr^2} \left( 1 - \frac{4}{\ell |p|} \right) f(p^2) \de p \leq \Tr f(-\Delta) \leq \frac{\ell^2}{4\pi^2}\int_{\Rr^2} \left( 1 + \frac{4}{\ell |p|} \right) f(p^2) \de p + f(0).
	\end{equation}
	\begin{proof}
	The spectrum of $-\Delta$ is $\sigma(-\Delta) = [(2\pi/\ell)\Zz]^2$, hence 
	\begin{equation}
	\Tr f(-\Delta) = \sum_{p \in (2\pi/\ell) \Zz^2} f(p^2).
	\end{equation}
	Consider a decomposition of the plane into squares of side length $2\pi/\ell$. Since $f$ is monotone decreasing,  the smallest value of $f(p^2)$ for $p$ in such a square is obtained at the corner that is farthest away from the origin. Thus  the sum over the points $p$ that do not lie on a coordinate axis (i.e., the points $p = (p_1,p_2)$ for which neither $p_1 \neq 0$ nor $p_2 \neq 0$) is the lower Riemann sum to the integral of $f$ over the plane:
		\begin{equation}
		\sum_{p \in (2\pi/\ell)\Zz^2} f(p^2) \leq \frac{\ell^2}{4\pi^2} \int_{\Rr^2} f(p^2) \de p + \sum_{p \in \text{axes}} f(p^2).
		\end{equation}
		Similarly, we can estimate the sum over the axes by a one-dimensional integral as
		\begin{equation}
		\sum_{p \in \text{axes}} f(p^2) = f(0) + 4 \sum_{ n =1}^\infty  f( (2\pi n/\ell)^2) \leq f(0) + \frac{2 \ell}{\pi} \int_0^\infty f(p^2) \de p = f(0) + \frac{\ell}{\pi^2} \int_{\Rr^2} \frac{f(p^2)}{|p|} \de p.
		\end{equation}
		In combination, this yields the second inequality in \eqref{eql2}.
		
		For the lower bound we proceed in a similar fashion. We use that $f(p^2)$ attains its largest value at the corners that lie closest to the origin, and conclude that the sum over all points is the upper Riemann sum to the integral over the plane without the region
		\begin{equation}
		\mathcal{G} = \left\{ (p_1, p_2) \in \Rr^2 : 0 < p_1 < \frac{2\pi}{\ell} \text{ or } - \frac{2\pi}{\ell} < p_2 < 0 \right\}. 
		\end{equation}
		See Fig.~\ref{fig-trace-approximation} for an illustration.
		In particular, 
		\begin{equation}
		\int_{\Rr^2} f(p^2) \de p  
		\leq  \frac{4\pi^2}{\ell^2} \sum_{p \in (2\pi/\ell)\Zz^2} f(p^2) + \int_{\mathcal{G}} f(p^2) \de p.
		\end{equation}
		%
%
%
		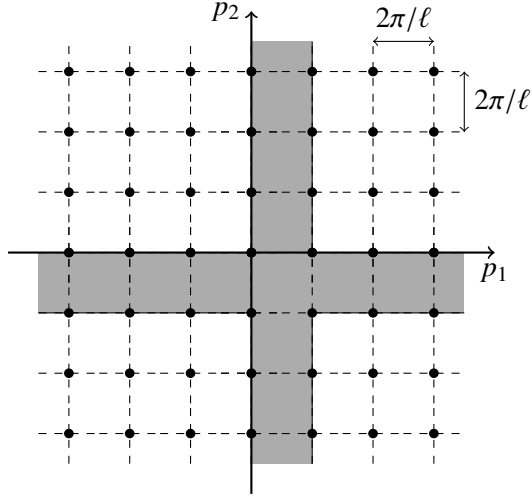
\begin{figure}[htb]
			\centering
			\begin{tikzpicture}[scale=0.8]
			\fill[fill=gray!65!white] (0,-3.5) rectangle(1,3.5);
			\fill[fill=gray!65!white] (-3.5,-1) rectangle(3.5,0);		
			\draw[<-,thick] (0,4) node[left] {$p_2$} -- (0,-4);
			\draw[->,thick] (-4,0) -- (4,0) node[below] {$p_1$};
			\foreach \x in {0,...,3} 
			{
				\foreach \y in {0,...,3}
				{
					\draw[fill=black] (\x,\y) circle(2pt);
					\draw[fill=black] (-\x,\y) circle(2pt);
					\draw[fill=black] (\x,-\y) circle(2pt);
					\draw[fill=black] (-\x,-\y) circle(2pt);
					\draw[dashed] (\x,\y) ++ (-0.5,0) -- ++ (1,0);
					\draw[dashed] (-\x,\y) ++ (-0.5,0) -- ++ (1,0);
					\draw[dashed] (\x,-\y) ++ (-0.5,0) -- ++ (1,0);
					\draw[dashed] (-\x,-\y) ++ (-0.5,0) -- ++ (1,0);
					\draw[dashed] (\x,\y) ++ (0,-0.5) -- ++ (0,1);
					\draw[dashed] (-\x,\y) ++ (0,-0.5) -- ++ (0,1);
					\draw[dashed] (\x,-\y) ++ (0,-0.5) -- ++ (0,1);
					\draw[dashed] (-\x,-\y) ++ (0,-0.5) -- ++ (0,1);
				}
			}
			\draw (-3.5,-1) -- (0,-1);
			\draw (3.5,-1) -- (1,-1) -- (1,-3.5);
			\draw (1,3.5) -- (1,0);
			\draw [<->] (3.5,2) -- (3.5,3) node[right,pos=0.5] {$2\pi/\ell$};
			\draw [<->] (2,3.5) -- (3,3.5) node[above,pos=0.5] {$2\pi/\ell$};
			\end{tikzpicture}
			\caption{Illustration of the method of proof of Lemma \ref{lem-approx-traces}. For the upper bound we use the points that do not lie on the coordinate axes, while for the lower bound we estimate the sum over all points by the integral over the plane without the gray region $\mathcal{G}$.}
			\label{fig-trace-approximation}
		\end{figure}
%
		We estimate the integral over $\mathcal{G}$ by four times the integral over the strip $\{ 0 < p_1 < 2\pi/\ell, \, p_2 > 0 \}$:
		\begin{align*}
		\int_{\mathcal{G}} f(p^2) \de p &\leq 4 \int_0^{2\pi/\ell} \int_0^\infty f(p_1^2 + p_2^2) \de p_2 \de p_1\\
		&\leq \frac{8\pi}{\ell} \int_0^\infty f(p_2^2) \de p_2 = \frac{4}{\ell} \int_{\Rr^2} \frac{f(p^2)}{|p|} \de p. \asterisknum
		\end{align*}
		Combining the  previous two estimates, we obtain the first inequality in \eqref{eql2}. 
	\end{proof}
\end{lem}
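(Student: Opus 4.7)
My plan is to exploit the fact that $-\Delta$ with periodic boundary conditions on $[0,\ell]^2$ has explicit spectrum $\{|p|^2 : p \in (2\pi/\ell)\Zz^2\}$, so that $\Tr f(-\Delta) = \sum_{p \in (2\pi/\ell)\Zz^2} f(|p|^2)$. The lemma then reduces to comparing this sum with the Riemann integral $\frac{\ell^2}{4\pi^2} \int_{\Rr^2} f(|p|^2) \de p$, with boundary corrections coming from the fact that $f(|p|^2)$, although monotone in $|p|$, is not monotone as one crosses a coordinate axis in $\Rr^2$.

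The central idea is to tile $\Rr^2$ by squares of side length $2\pi/\ell$ with lattice points as corners. Because $f$ is monotone decreasing in its argument, on any such square the value of $f(|p|^2)$ is bounded above by its value at the corner closest to the origin and below by its value at the farthest corner. Assigning each square to one of its corners gives, for the lattice points that do not lie on an axis, a clean comparison between $f(|p|^2)$ and the integral of $f(|p|^2)$ over that square. Summing over such interior points yields one direction of the inequality up to a boundary term collecting the contribution of lattice points on the axes.

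These axis contributions take the form $f(0) + 4 \sum_{n \geq 1} f((2\pi n/\ell)^2)$, which by monotonicity is bounded by $f(0) + \frac{2\ell}{\pi} \int_0^\infty f(p^2) \de p$. Passing to polar coordinates rewrites this one-dimensional integral as $\frac{\ell}{\pi^2} \int_{\Rr^2} \frac{f(|p|^2)}{|p|} \de p$, producing exactly the $4/(\ell|p|)$ correction inside the two-dimensional integral of the upper bound. For the lower bound, the same tiling argument but with the "far corner" rule forces us to exclude a cross-shaped region $\mathcal{G}$ of strips of width $2\pi/\ell$ along the axes; the integral of $f(|p|^2)$ over $\mathcal{G}$ is again controlled by a one-dimensional integral of $f(p^2)$, yielding the same weight $4/(\ell|p|)$ with the opposite sign.

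I do not anticipate a serious obstacle here: the argument is essentially a careful Riemann-sum comparison using only the monotonicity of $f$. The only delicate point is bookkeeping at the coordinate axes, in particular keeping track of the extra $f(0)$ contribution on the upper side (which accounts for the origin being counted once in the sum but getting no "cancellation" from the integral), and ensuring that the strip integrals along the axes are converted into the $|p|^{-1}$-weighted two-dimensional integrals in the stated form.
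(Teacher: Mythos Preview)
Your proposal is correct and follows essentially the same argument as the paper: decompose $\Rr^2$ into squares of side $2\pi/\ell$, use monotonicity of $f$ to compare lattice values with square integrals via the near/far corner, handle the axis points separately for the upper bound and the uncovered strip region $\mathcal{G}$ for the lower bound, and convert the resulting one-dimensional integrals into the $|p|^{-1}$-weighted two-dimensional form. The only slip is terminological: for the lower bound you need the \emph{near}-corner rule (so that $f$ at the lattice point dominates the square average), not the far-corner rule; but your description of the excluded strip region and its estimate is exactly what the paper does.
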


Recall that the scattering length of an interaction potential $v$ can be defined by minimizing the functional
\begin{equation} \label{eqn-0-energy-scattering}
 \int_{|x|<R} \left( 2 | \nabla g(x)|^2 + v(|x|) |g(x)|^2 \right) \de x  
\end{equation}
over functions satisfying the boundary condition $g(x) = 1$ on the sphere $|x|=R$ (see \cite[Appendix~A]{LY2001} for  details). The minimal value equals $4\pi / \ln (R/a) $, and this defines the scattering length $a$ in case $v$  is supported on a ball of radius $R_0 < R$. The unique minimizer $g_0$ of \eqref{eqn-0-energy-scattering} then satisfies $g_0(r) = \ln(r/a)/\ln(R/a)$ for $R_0 < r < R$. If  $v$ has infinite range, this definition yields the scattering length of $v(|x|) \theta(R-|x|)$, denoted by $a_R$. The latter is increasing in $R$, and  the scattering length $a$ of $v$ is  obtained by taking $R\to \infty$.  As already mentioned above, the finiteness of  $a$  is  equivalent to integrability of $v(|x|)(\ln|x|)^2$ outside a ball  (see \cite[Lemma~1]{Seiringer12}). 

The minimizer $g_0$ of \eqref{eqn-0-energy-scattering} has the following properties. 

\begin{lem} \label{lem-g'}
	
	Let $g_0$ be the minimizer of \eqref{eqn-0-energy-scattering} subject to the boundary condition $g_0(R) = 1$. Then the following holds:
	
	\begin{enumerate}
		\item For all $0 < r \leq R$
		\begin{equation}\label{gg0}
		g_0(r) \geq \left[ \frac{\ln(r/a)}{\ln(R/a)} \right]_+
		\end{equation}
		
		\item  $g_0$ is a monotone nondecreasing function of $r$.
		
		\item The integral of the derivative of $g_0$ satisfies the bound
		\begin{equation}
		\int_{|x|<R} g'_0(|x|)\de x \leq \frac{2 \pi R}{\ln(R/a)}.
		\end{equation}
	\end{enumerate}
\end{lem}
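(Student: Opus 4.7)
The plan is to reduce all three claims to a single key identity, namely $R\, g_0'(R) = 1/\ln(R/a)$. To establish it, I would combine the definition of the scattering length (so that the minimum value of the functional \eqref{eqn-0-energy-scattering} equals $4\pi/\ln(R/a)$) with an integration by parts on the ball $\{|x|<R\}$. By spherical symmetry, and noting that replacing $g_0$ by $|g_0|$ affects neither the functional nor the boundary condition, $g_0$ may be taken radial and nonnegative; it satisfies the Euler--Lagrange equation $-2\Delta g_0 + v g_0 = 0$. Integration by parts against this equation, using $g_0 \equiv 1$ on $|x|=R$, yields
\begin{equation*}
\int_{|x|<R} \bigl( 2|\nabla g_0|^2 + v g_0^2 \bigr) \de x = 4\pi R\, g_0'(R),
\end{equation*}
and equating this with $4\pi/\ln(R/a)$ gives the identity.

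With this in hand, the three parts follow quickly, but it is cleanest to prove Part 2 first: the radial form of the Euler--Lagrange equation reads $(r g_0'(r))' = r v(r) g_0(r)/2 \geq 0$, so $r g_0'(r)$ is nondecreasing, and combined with $r g_0'(r) \to 0$ as $r \to 0^+$ this gives $g_0' \geq 0$. For Part 1, the same monotonicity together with the key identity gives the pointwise bound $g_0'(r) \leq 1/(r \ln(R/a))$; integrating from $r$ to $R$ and using $g_0(R)=1$ yields $g_0(r) \geq \ln(r/a)/\ln(R/a)$, which combined with $g_0 \geq 0$ is exactly \eqref{gg0}. For Part 3, the same bound $r g_0'(r) \leq 1/\ln(R/a)$ integrated radially gives
\begin{equation*}
\int_{|x|<R} g_0'(|x|)\,\de x = 2\pi \int_0^R r g_0'(r)\,\de r \leq \frac{2\pi R}{\ln(R/a)}.
\end{equation*}

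The principal obstacle I anticipate is regularity: justifying the integration by parts when $v$ may be singular or include a hard-core part, and confirming that $r g_0'(r) \to 0$ at the origin (needed so that nonnegativity of $(r g_0')'$ transfers to nonnegativity of $r g_0'$). For a hard-core potential one works on the complement of the core, where $g_0$ satisfies the Helmholtz equation with Dirichlet data at the core boundary, so the core-side boundary terms vanish trivially. For more general singular $v$ these properties follow from standard scattering theory, or alternatively can be obtained by truncating $v$, running the argument for the truncation, and passing to the limit. Once these technicalities are dealt with, the core argument is remarkably short.
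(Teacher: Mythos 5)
Your proposal is correct, and for Part 3 it takes a genuinely different route from the paper's. The paper cites \cite{LY2001} for Parts 1 and 2, and proves Part 3 by the radial integration by parts $\int_0^R r g_0'(r)\de r = R g_0(R) - \int_0^R g_0(r)\de r$ followed by the lower bound of Part 1, which yields $\int_0^R r g_0'(r)\de r \leq (R-a)/\ln(R/a)$; this requires nothing beyond $H^1$ regularity of $g_0$ and the already established \eqref{gg0}. You instead derive all three parts from a single scaffolding: the identity $R\,g_0'(R) = 1/\ln(R/a_R)$ (Green's identity against the Euler--Lagrange equation, plus the known minimal value of the functional), the monotonicity of $r\mapsto r g_0'(r)$ from $(r g_0')' = \tfrac12 r v g_0 \geq 0$, and the vanishing of $r g_0'(r)$ at the inner boundary. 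This is more self-contained and gives the pointwise bound $r g_0'(r) \leq 1/\ln(R/a)$, which immediately yields all three claims, but it costs you the Euler--Lagrange equation and a boundary regularity argument; you rightly flag these as the delicate points for hard-core or otherwise singular $v$, and the remedies you sketch are the standard ones. One further care point worth making explicit: when $v$ extends beyond $\{|x|<R\}$, the minimum of \eqref{eqn-0-energy-scattering} equals $4\pi/\ln(R/a_R)$ with $a_R \leq a$ the scattering length of the truncated potential $v\,\theta(R-|\cdot|)$, not $4\pi/\ln(R/a)$. Your key identity therefore reads $R g_0'(R) = 1/\ln(R/a_R) \leq 1/\ln(R/a)$, which is the direction you need for Parts 2 and 3, and for Part 1 the resulting bound $g_0(r) \geq \ln(r/a_R)/\ln(R/a_R)$ dominates $\ln(r/a)/\ln(R/a)$ on $a \leq r \leq R$ (since $s\mapsto \ln(r/s)/\ln(R/s)$ is nonincreasing there), so \eqref{gg0} follows as stated.
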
	
	\begin{proof}
		For the proof of the first two properties see \cite[proof of Lemma~A.1]{LY2001}. For the third one note that since $g_0$ is a radial function we can integrate by parts in the radial variable, and then use \eqref{gg0}:
		\begin{equation}
		\int_0^R r g_0'(r) \de r = R g_0(R) - \int_0^R g_0(r) \de r \leq R - \int_{a}^R \frac{\ln(r/a)}{\ln(R/a)} \de r = \frac{R-a}{\ln (R/a)} \leq \frac{R}{\ln(R/a)}. 
		\end{equation}
		Since the angular integration only gives a factor of $2\pi$, we arrive at the result.
	\end{proof}

The last tool we require is a variational formulation of the free energy, which is very useful for the purpose of proving an upper bound. We first define the free energy functional  in the canonical setting, then in the grand canonical setting and finally show  that in the thermodynamic limit the corresponding free energies agree. The canonical free energy in finite volume is defined by
\begin{equation}\label{gibbs}
F_{\text{c}}(\beta,N,L) = \inf_\Gamma \left\{ \Tr_{\mathcal{H}_N} H_N \Gamma - \beta^{-1} S(\Gamma) \right\},
\end{equation}
where $H_N$ is given in \eqref{eqn-def-N-body-Hamiltonian} and the infimum is taken over density matrices $\Gamma$ for  $N$ particles, i.e, positive trace class operators on $\mathcal{H}_N$ (defined in \eqref{def:HN}) with $\Tr_{\mathcal {H}_N} \Gamma = 1$. Here, $S(\Gamma)$ is the von Neumann entropy defined by
\begin{equation}
S(\Gamma) = - \Tr_{\mathcal{H}_N} \Gamma \ln \Gamma.
\end{equation}
The Gibbs variational principle states that  the infimum in \eqref{gibbs} is attained for  the  Gibbs state $\Gamma = \e^{-\beta H_N}/\Tr_{\mathcal{H}_N} \e^{-\beta H_N}$, hence $F_c(\beta,N,L) = -\beta^{-1} \ln \Tr_{\mathcal{H}_N} \e^{-\beta H_N}$.

The grand canonical free energy, on the other hand, is defined by
\begin{equation} \label{eqn-def-gc-free-energy}
F_{\text{gc}}(\beta,N,L) = \inf_{\Gamma} \left\{ \Tr_{\mathcal{F}} \mathbb{H} \Gamma - \beta^{-1} S(\Gamma) \right\},
\end{equation}
where the infimum is taken over density matrices $\Gamma$ on the bosonic Fock space $\mathcal{F} = \bigoplus_{M=0}^\infty \mathcal{H}_M$ with  expected number of particles equal to $N$, and  $\mathbb{H} = \bigoplus_{M=0}^\infty H_M$ with $H_0 = 0$, $H_1 = - \Delta$ and $H_N$  defined in \eqref{eqn-def-N-body-Hamiltonian} for $N \geq 2$. 
The trace in the definition of the entropy in \eqref{eqn-def-gc-free-energy} is also over $\mathcal{F}$, but we suppress this in the notation for simplicity. 
In the thermodynamic limit we obtain the free energy per unit volume as a function of the inverse temperature $\beta$ and the density $\rho$
\begin{equation}
f(\beta, \rho) = \lim_{\substack{N,L \to \infty \\ N/L^2 = \rho}} \frac{F_{\text{c}}(\beta,N,L)}{L^2}, \qquad 	f_{\text{gc}}(\beta, \rho) = \lim_{\substack{N,L \to \infty \\ N/L^2 = \rho}} \frac{F_{\text{gc}}(\beta,N,L)}{L^2}.
\end{equation}
The following is a simple consequence of the well-known equivalence of ensembles.

\begin{lem}\label{lem:cgc} 
For any $\beta>0$ and $\rho>0$
\begin{equation}
	f(\beta,\rho) = f_{\mathrm{gc}}(\beta,\rho).
\end{equation}
\end{lem}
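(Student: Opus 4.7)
The plan is to prove the two inequalities separately. The easy direction $f_{\mathrm{gc}}(\beta,\rho) \leq f(\beta,\rho)$ follows by using the canonical Gibbs state $e^{-\beta H_N}/\Tr_{\mathcal{H}_N} e^{-\beta H_N}$, regarded as a density matrix on $\mathcal{F}$ supported entirely on the $N$-particle sector, as a trial state in the grand canonical variational problem \eqref{eqn-def-gc-free-energy}. This trial state has expected particle number exactly $N$ and realizes the canonical free energy, hence $F_{\mathrm{gc}}(\beta,N,L) \leq F_c(\beta,N,L)$; dividing by $L^2$ and passing to the thermodynamic limit yields the bound.

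For the reverse inequality $f(\beta,\rho) \leq f_{\mathrm{gc}}(\beta,\rho)$, I would decompose an arbitrary competitor $\Gamma$ for the grand canonical problem over sectors of fixed particle number: $\Gamma = \bigoplus_{M=0}^\infty p_M \Gamma_M$ with $\Gamma_M$ a density matrix on $\mathcal{H}_M$, $p_M \geq 0$, $\sum_M p_M = 1$ and $\sum_M M p_M = N$. The energy is additive over sectors and the entropy splits as $S(\Gamma) = \sum_M p_M S(\Gamma_M) - \sum_M p_M \ln p_M$, so that
\begin{equation*}
\Tr_{\mathcal{F}} \mathbb{H}\Gamma - \beta^{-1} S(\Gamma) \geq \sum_M p_M F_c(\beta,M,L) - \beta^{-1}\sum_M p_M \ln p_M.
\end{equation*}
Dividing by $L^2$ and invoking the convexity of the limiting free energy density $\rho \mapsto f(\beta,\rho)$ (a standard consequence of thermodynamic-limit theory; see \cite{Ruelle69,Robinson71}), Jensen's inequality applied to the weights $p_M$ gives $\sum_M p_M f(\beta, M/L^2) \geq f(\beta,\rho)$, which is what one wants.

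The main obstacle I foresee is a uniformity issue. The Jensen step is performed with respect to the weights $p_M$, whereas the ``cost function'' $F_c(\beta,M,L)/L^2$ converges to $f(\beta, M/L^2)$ only for each fixed $M/L^2$. One therefore needs a uniform lower bound $F_c(\beta,M,L)/L^2 \geq f(\beta, M/L^2) - o(1)$ valid for $M/L^2$ ranging in a compact interval around $\rho$, together with a tail estimate ruling out significant mass at $M$ far from $\rho L^2$. The tail estimate follows from $v \geq 0$, giving $F_c(\beta,M,L) \geq F_c^{\mathrm{ideal}}(\beta,M,L)$, and the super-linear growth of the ideal canonical free energy as $M/L^2 \to \infty$, which makes contributions from high-$M$ sectors unfavourable in the variational problem. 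Once one restricts the effective sum to $M \lesssim L^2$, the Shannon entropy term $-\sum_M p_M \ln p_M$ is bounded by $O(\ln L^2)$ by the maximum-entropy principle, which is negligible compared to $L^2$. Combining these estimates yields $F_{\mathrm{gc}}(\beta,N,L)/L^2 \geq f(\beta,\rho) - o(1)$ in the thermodynamic limit, which together with the easy direction gives the stated equality.
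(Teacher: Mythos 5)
Your proposal takes a genuinely different route from the paper's for the nontrivial direction $f \leq f_{\mathrm{gc}}$. The paper simply drops the constraint $\braket{\Nn}_\Gamma = \rho L^2$ to obtain $f_{\mathrm{gc}}(\beta,\rho) \geq \mu\rho - p(\beta,\mu)$ for every $\mu$, and then quotes the classical Legendre duality $f(\beta,\rho) = \sup_\mu\left(\mu\rho - p(\beta,\mu)\right)$ from \cite{Ruelle69}. Your approach decomposes a grand canonical trial state over particle-number sectors and invokes convexity of $\rho\mapsto f(\beta,\rho)$ together with Jensen, an entropy bound and a tail estimate. This is a valid alternative in spirit, and more self-contained in that it avoids quoting the pressure duality outright, but it requires more work, and the uniform-convergence input you flag is essentially equivalent to the duality the paper cites, so the paper's route is the more economical one.

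Three points in your sketch need to be repaired. First, the decomposition $\Gamma = \bigoplus_M p_M\Gamma_M$ is not automatic; one must first replace $\Gamma$ by the pinched state $\sum_M P_M\Gamma P_M$, which leaves $\Tr\mathbb{H}\Gamma$ and $\Tr\Nn\Gamma$ unchanged (both commute with the particle-number projections) while it can only increase $S(\Gamma)$, so this is without loss of generality but needs saying. Second, there is a sign error in your display: from $S(\Gamma) = \sum_M p_M S(\Gamma_M) - \sum_M p_M\ln p_M$ one gets
\begin{equation*}
\Tr_{\mathcal{F}}\mathbb{H}\Gamma - \beta^{-1}S(\Gamma) \geq \sum_M p_M F_c(\beta,M,L) + \beta^{-1}\sum_M p_M\ln p_M,
\end{equation*}
with a \emph{plus} sign, and since $\sum_M p_M\ln p_M \leq 0$ this last term is a genuine loss that must be controlled (consistent with your later remark about the Shannon entropy), not a favourable term. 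Third, the stated mechanism for the tail estimate is wrong: the canonical free energy of the ideal Bose gas does \emph{not} grow super-linearly in $M$ at fixed $L$ --- bosons may all occupy the zero mode, so $F_c^{\mathrm{ideal}}(\beta,M,L)$ actually decreases (slowly, logarithmically) to $-\infty$ as $M\to\infty$. The tail is nevertheless controllable because this divergence is so slow: one has $F_c^{\mathrm{ideal}}(\beta,M,L) \gtrsim -\beta^{-1}\ln M$, whence by concavity of the logarithm and the mean constraint $\sum_M M p_M = \rho L^2$, the tail contribution per unit volume is $\gtrsim -\beta^{-1}L^{-2}\ln(\rho L^2)\to 0$. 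With these corrections your outline should close, but an argument from super-linear growth will not.
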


\begin{proof}
		One trivially has $f_\text{gc}(\beta,\rho) \leq f(\beta,\rho)$ since the former is obtained by taking the infimum over a larger set. Let $\mathcal{F}^{\beta,L}(\Gamma)$ denote the grand canonical free energy functional (i.e., the right-hand side of \eqref{eqn-def-gc-free-energy} without the infimum) and  introduce the grand canonical pressure functional in finite volume
		\begin{equation}
		- L^2 \mathcal{P}_L^{\beta,\mu}(\Gamma) = \Tr_{\mathcal{F}} (\mathbb{H}- \mu \Nn) \Gamma - \beta^{-1} S(\Gamma)
		\end{equation}
		for $\mu \in \mathbb{R}$, where $\Nn$ is the particle number operator on Fock space. Maximizing this functional over all density matrices $\Gamma$, we obtain the grand canonical pressure in finite volume
		\begin{equation}
		P_L(\beta,\mu) = \sup_\Gamma \mathcal{P}_L^{\beta,\mu}(\Gamma).
		\end{equation}
		Finally, the thermodynamic pressure is defined by
		\begin{equation}
		p(\beta,\mu) = \lim_{L \to \infty} P_L(\beta,\mu).
		\end{equation}
		For any $\mu \in \Rr$ we have
		\begin{align*}
		f_\text{gc}(\beta,\rho) &= \lim_{L\to \infty} L^{-2} \inf_{\Gamma, \braket{\Nn}_\Gamma = \rho L^2} \mathcal{F}^{\beta,L}(\Gamma)\\
		&= \lim_{L \to \infty} L^{-2} \inf_{\Gamma, \braket{\Nn}_\Gamma = \rho L^2} \left( \Tr_{\mathcal{F}} (\mathbb{H}- \mu \Nn) \Gamma - \beta^{-1} S(\Gamma) + \mu \rho L^2 \right)\\
		&\geq \lim_{L \to \infty} L^{-2} \inf_{\Gamma} \left( - L^2 \mathcal{P}_L^{\beta,\mu}(\Gamma) + \mu \rho L^2 \right) 
		= \mu \rho - p(\beta,\mu), \asterisknum
		\end{align*}
		where we relaxed the condition on the expectation of the particle number operator in order to obtain a lower bound in terms of the pressure. It is  well-known  (see, e.g.,  \cite[Thm.~3.5.8]{Ruelle69}) that the canonical free energy is the Legendre transform of the pressure, and thus 
		\begin{equation}
		f_\text{gc}(\beta,\rho) \geq \sup_\mu \left( \mu \rho - p(\beta,\mu) \right) = f(\beta,\rho).
		\end{equation}
		Consequently $f(\beta,\rho) = f_\text{gc}(\beta,\rho)$.
	\end{proof}

\subsection{Changing boundary conditions}\label{ss:bc}

In this subsection we shall relate Hamiltonians with different boundary conditions. Our method is inspired by the arguments in \cite{Robinson71}. Let  $\Lambda_L=[-L/2,L/2]^2$ denote the square of side length $L$ centered at the origin.   For $0< b < L/2$, we introduce a cutoff function $h:\mathbb{R}\to [0,1]$ with the following properties. 
\begin{enumerate}
	\item $h$ is real-valued, even and continuously differentiable
	\item $h(x) = 0$ for $|x| > L/2 + b$
	\item $h(x) = 1$ for $|x| < L/2 - b$
	\item $h(x)^2 + h(x-L)^2 + h(x + L)^2 = 1$ for $-L/2 \leq x \leq L/2$
	\item $|h'(x)|^2 \leq 1/b^2$ 
	for all $x \in \Rr$
\end{enumerate}
Condition 4 can be reformulated as  antisymmetry of $y \mapsto 1/2 - |h(y - L/2)|^2$ on $[-b,b]$. For points in the plane, we shall slightly abuse notation and write $h(x) = h(x^{(1)}) h(x^{(2)})$ for  $x = (x^{(1)},x^{(2)}) \in \Rr^2$. Finally, define $V : \mathcal{H}_N(\Lambda_L) \to \mathcal{H}_N(\Lambda_{L+2b})$ by
\begin{equation} \label{eqn-mapping-V}
 (V\psi)(x_1, \ldots{}, x_N) = \psi_\text{per}(x_1, \ldots{}, x_N) \prod_{i=1}^N h(x_i),
\end{equation}
where $\psi_\text{per}$ denotes the periodic extension of $\psi$ to $\Lambda_{3L}$, 
\begin{equation}
\psi_\text{per}(x_1,\dots,x_N) =  \sum_{ \{ n_1,\dots,n_N\} \in \{-1,0,1\}^{2N}} \psi(x_1 + n_1 L, \dots, x_N+ n_N L) \prod_{k=1}^N \chi_{\Lambda_L}(x_k+n_k L).
\end{equation}
Here, $\chi_{\Lambda_L}$ denotes the characteristic function of the set $\Lambda_L$. As in \cite[Lemma~2.1.12]{Robinson71} one easily checks that $V$ is an isometry:
\begin{align}\nonumber
\| V \psi\|^2 &= \sum_{ \{ n_1,\dots,n_N\} \in \{-1,0,1\}^{2N}} \int_{\Lambda_{3L}^N} | \psi(x_1 + n_1 L, \dots, x_N+ n_N L)|^2 \prod_{k=1}^N \chi_{\Lambda_L}(x_k+n_k L) h(x_k)^2 \de x_k \\  & = \sum_{ \{ n_1,\dots,n_N\} \in \{-1,0,1\}^{2N}} \int_{\Lambda_{L}^N} | \psi(x_1, \dots, x_N)|^2 \prod_{k=1}^N h(x_k - n_k L)^2 \de x_k = \|\psi\|^2
\end{align}
where we have used that $\sum_{n\in\{-1,0,1\}^2} h(x+nL)^2 = 1$ for $x \in  \Lambda_L$. 

\begin{lem} \label{lem-D-N-bc}
	With $v$ as above, let $H^{\mathrm{D}}_{N,\Lambda_{L+2b}}$ denote the $N$-particle Hamiltonian 
	\begin{equation}\label{def:HD}
	H_{N,\Lambda_{L+2b}}^{\mathrm{D}} = -\sum_{i=1}^N \Delta_{i,L+2b}^{\mathrm{D}} + \sum_{i<j}^N v(|x_i-x_j|)
	\end{equation}
	on  $\mathcal{H}_N(\Lambda_{L+2b})$  with Dirichlet boundary condition, and let
	\begin{equation}\label{def:Hp}
	H_{N,\Lambda_{L}}^{\mathrm{per}} = -\sum_{i=1}^N \Delta_{i,L}^{\mathrm{per}} + \sum_{i<j}^N v_{\mathrm {per}}(x_i-x_j)
	\end{equation}
	be the $N$-particle Hamiltonian on $\mathcal{H}_N(\Lambda_{L})$ with periodic boundary conditions and interaction
	\begin{equation}
	v_{\mathrm{per}}(x) = \sum_{n\in \mathbb{Z}^2} v(|x+ n L|).
	\end{equation}
	For any $\psi$ in the form domain of $H_{N,\Lambda_L}^{\mathrm{per}}$ we have
	\begin{equation}
	\braket{ V\psi, H_{N,\Lambda_{L+2b}}^{\mathrm{D}} V\psi} \leq \braket{\psi, H_{N,\Lambda_L}^{\mathrm{per}} \psi} + \frac{4N}{b^2} \| \psi \|^2.
	\end{equation}
\end{lem}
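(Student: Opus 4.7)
The plan is to evaluate the kinetic and potential parts of $\braket{V\psi, H_{N,\Lambda_{L+2b}}^{\mathrm{D}} V\psi}$ separately, in each case reducing back to the periodic quantity by the same partition-of-unity change of variables $(x_k = y_k + n_k L$, $y_k \in \Lambda_L$, $n_k \in \{-1,0,1\}^2)$ that was just used to verify $\|V\psi\|^2 = \|\psi\|^2$.

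For the kinetic part, write $V\psi = \psi_{\mathrm{per}} H$ with $H(x) = \prod_{k=1}^N h(x_k)$ and expand $|\nabla_i(V\psi)|^2$. Integration by parts (boundary terms vanish since $H\equiv 0$ on $\partial \Lambda_{L+2b}^N$ by property 2) yields the identity
\begin{equation*}
\int_{\Lambda_{L+2b}^N} |\nabla_i(V\psi)|^2 \de x = \int H^2 |\nabla_i \psi_{\mathrm{per}}|^2 \de x - \int |\psi_{\mathrm{per}}|^2 H \Delta_i H \de x.
\end{equation*}
The first summand equals $\braket{\psi, -\Delta_{i,L}^{\mathrm{per}}\psi}$ by the same change of variables used for the norm. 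For the second, the pointwise identity $hh'' = \tfrac12 (h^2)'' - (h')^2$ together with the observation that differentiating property 4 twice gives $\sum_{n \in \{-1,0,1\}} (h^2)''(t - nL) = 0$ on $[-L/2,L/2]$ shows that, after integrating out the variables $x_k$ with $k \neq i$ via the partition of unity and performing the same change of variables in $x_i$, only contributions of $(h')^2$ survive. Property 5 combined with the fact that $h'$ is supported in a $b$-neighborhood of the boundary gives $\sum_n (h'(t - nL))^2 \leq 2/b^2$; summing over the two spatial directions yields
\begin{equation*}
-\int |\psi_{\mathrm{per}}|^2 H \Delta_i H \de x \leq \frac{4}{b^2} \|\psi\|^2.
\end{equation*}
Summing over $i$ bounds the kinetic part by $\braket{\psi, -\sum_i \Delta_{i,L}^{\mathrm{per}} \psi} + 4N/b^2 \|\psi\|^2$.

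For the potential part, the same partition-of-unity reduction in the coordinates $k \neq i,j$ leaves an integral of $|\psi_{\mathrm{per}}|^2 h(x_i)^2 h(x_j)^2 v(|x_i-x_j|)$ over $(x_i,x_j) \in \Lambda_{L+2b}^2$. After changing variables $x_i = y_i + m_i L$, $x_j = y_j + m_j L$ with $y_i, y_j \in \Lambda_L$ and $m_i, m_j \in \{-1,0,1\}^2$, the plan is to bound $h(y_j - m_j L)^2 \leq 1$ and extend the sum over $m_j$ to all of $\mathbb{Z}^2$, producing $v_{\mathrm{per}}(y_i - y_j)$; property 4 then collapses the remaining $\sum_{m_i \in \{-1,0,1\}^2} h(y_i - m_i L)^2$ to $1$. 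Summing over $i<j$ bounds the potential contribution by $\braket{\psi, \sum_{i<j} v_{\mathrm{per}}(y_i - y_j)\psi}$, completing the proof.

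The main obstacle is the cross term $\int 2\mathrm{Re}(\overline{\psi_{\mathrm{per}}} \nabla_i \psi_{\mathrm{per}}) \cdot H \nabla_i H \de x$ in the expansion of $|\nabla_i(V\psi)|^2$: it has no definite sign, and any Cauchy--Schwarz splitting would introduce an error comparable to the periodic kinetic energy itself, thereby losing the small parameter $1/b^2$. The integration-by-parts step pairs this cross term with $\int |\psi_{\mathrm{per}}|^2 |\nabla_i H|^2 \de x$ into the single quantity $-\int |\psi_{\mathrm{per}}|^2 H \Delta_i H \de x$, whose translates telescope through $\sum_n (h^2)''(\cdot - nL) = 0$ and leave only the harmless $(h')^2 \leq 1/b^2$ contribution.
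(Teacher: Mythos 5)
Your argument is essentially the paper's (the potential part matches the paper verbatim, and the kinetic part is the content of the cited Robinson Lemma~2.1.12, which the paper chooses not to reproduce), so on the whole the proposal is correct. One technical point deserves mention: your integration-by-parts step, which produces $-\int |\psi_{\mathrm{per}}|^2 H\,\Delta_i H$ and then invokes $hh'' = \tfrac12(h^2)'' - (h')^2$ and the twice-differentiated form of property~4, requires $h$ to be $C^2$, whereas property~1 only grants $C^1$. This is easily repaired (one may always take a smooth $h$), but it is avoidable altogether: after performing the same partition-of-unity change of variables on the un-integrated-by-parts expression
\begin{equation*}
\int_{\Lambda_{L+2b}^N} |\nabla_i(V\psi)|^2 = \int H^2 |\nabla_i\psi_{\mathrm{per}}|^2 + \int |\psi_{\mathrm{per}}|^2 |\nabla_i H|^2 + \tfrac12 \int \nabla_i|\psi_{\mathrm{per}}|^2 \cdot \nabla_i(H^2),
\end{equation*}
the cross term vanishes outright because $\sum_{n}\nabla_i\bigl(H^2(\cdot + nL)\bigr) = \nabla_i\bigl(\sum_n H^2(\cdot + nL)\bigr) = \nabla_i 1 = 0$ by property~4; no second derivative of $h$ is ever needed, and one lands on the middle term $\sum_n |\nabla_i H(\cdot+nL)|^2 \le 4/b^2$ directly, which is the same estimate you obtain. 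This is the standard IMS-localization identity, which is also what Robinson's argument amounts to. Everything else — the $2/b^2$ count per coordinate direction via the disjoint supports of the translates of $h'$, the $h^2 \le 1$ bound plus the extension of the lattice sum to all of $\mathbb{Z}^2$ in the potential part using $v \ge 0$, the collapse via property~4 — is exactly right.
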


\begin{proof}
For the kinetic energy, the proof is the same as the one of \cite[Lemma~2.1.12]{Robinson71}, where the case of Neumann (instead of periodic) boundary conditions is considered, and we will not repeat it here.  For the interaction energy, we note that
\begin{align}\nonumber
& \braket{ V\psi, v(|x_1-x_2|) V\psi}  \\ \nonumber & = \sum_{n_1,n_2\in \{-1,0,1\}^2} \int_{\Lambda_L^N} |\psi(x_1,\dots,x_N)|^2  v(| x_1 + n_1 L - x_2 - n_2 L|) h(x_1 + n_1 L)^2 h(x_2 + n_2 L)^2\prod_{k=1}^N \de x_k
\\  & \leq \int_{\Lambda_L^N} |\psi(x_1,\dots,x_N)|^2  v_{\mathrm{per}}(x_1  - x_2) \prod_{k=1}^N \de x_k = \braket{ \psi, v_{\mathrm{per}}(x_1-x_2) \psi} 
\end{align}
where we have used that $v\geq 0$ and condition $4$ on $h$ above. This completes the proof.
\end{proof}

\subsection{Box method}

We will now construct a trial state that we insert into the grand canonical free energy functional in \eqref{eqn-def-gc-free-energy}. As shown in Lemma~\ref{lem:cgc}, the canonical and grand canonical free energies coincide in the thermodynamic limit, hence it is legitimate to work with the latter. For some $R_0>a>0$, consider a partition of the square of size $L$ into $(L/\ell)^2$ smaller boxes of size $\ell - R_0$, separated a distance $R_0$. We consider a trial state  that is a tensor product\footnote{Strictly speaking, one should take a symmetric tensor product here; the symmetrizing has no effect, however, as all Hamiltonians considered are  local, in the sense that matrix elements vanish for functions with disjoint support.} of translates of a given state $\Gamma$ that is supported on a small box and has an average particle number $n = N (\ell/L)^2 = \rho \ell^2$. 
Note that the choice of $\ell$ is restricted by the condition that $L/\ell$ is an integer, which will no play role, however, as $L\to \infty$ while $\ell$ stays finite in the thermodynamic limit. 

With $\rho_\Gamma$ denoting the one-particle density of $\Gamma$,  the variational principle \eqref{eqn-def-gc-free-energy} implies
\begin{equation} \label{eqn-var-princ-1}
\ell ^2 f(\beta,\rho)  \leq   \Tr_{\mathcal{F}} \left( \mathbb{H}_{\Lambda_{\ell-R_0}}^{\mathrm{D}}\Gamma \right)  -  \frac 1 \beta S(\Gamma)  + \frac 12 \sum_{n\in \mathbb{Z}^2, n\neq 0}  \int v(|x-y|) \rho_\Gamma(x)\rho_\Gamma(y+n \ell) \de x  \de y,
\end{equation}
where $ \mathbb{H}_{\Lambda_{\ell-R_0}}^{\mathrm{D}}  = \bigoplus_N  {H}_{N,\Lambda_{\ell-R_0}}^{\mathrm{D}}$ with the Dirichlet Hamiltonians $ {H}_{N,\Lambda_{\ell-R_0}}^{\mathrm{D}}$ defined in \eqref{def:HD}. The last term in 
\eqref{eqn-var-princ-1} results from the interaction of particles in different boxes and vanishes if the range of $v$ is smaller than $R_0$.

We choose $\Gamma$ of the form  $\Gamma = V^* \Gamma_{\mathrm{P}}V$,  where  $V$ is  defined in \eqref{eqn-mapping-V}, and  $\Gamma_{\mathrm{P}}$ is a translation invariant state on the torus of side length $\ell-R_0-2b$, i.e., its eigenfunctions satisfy periodic boundary conditions and are in the domain of the periodic Hamiltonians $H_{N,\Lambda_{\ell-R_0 -2b}}^{\mathrm{per}}$ in   \eqref{def:Hp} for suitable particle numbers.
Lemma \ref{lem-D-N-bc} then implies
\begin{equation}
 \Tr_{\mathcal{F}} \left( \mathbb{H}_{\Lambda_{\ell-R_0}}^{\mathrm{D}}\Gamma \right)   \leq  \Tr_{\mathcal{F}}  \left( \mathbb{H}_{\Lambda_{\ell-R_0-2b}}^{\mathrm{per}} \Gamma_{\mathrm{P}}\right) + \frac{4}{b^2} \rho \ell^2
\end{equation}
with $ \mathbb{H}_{\Lambda_{\ell-R_0-2b}}^{\mathrm{per}}  = \bigoplus_N  {H}_{N,\Lambda_{\ell-R_0-2b}}^{\mathrm{per}}$. 
We  further use that the von Neumann entropy is invariant under isometries, hence $S(\Gamma) = S(\Gamma_{\mathrm{P}})$. The state $\Gamma_{\mathrm{P}}$ has a constant particle density $n (\ell-R_0-2b)^{-2} = \rho( 1- R_0/\ell -2b/\ell)^{-2}$, and it is not difficult to see that the density $\rho_\Gamma$ equals that number times the function $h$ used in the construction of $V$ in Sec.~\ref{ss:bc}. In particular, $\rho_\Gamma\leq    \rho( 1- R_0/\ell -2b/\ell)^{-2}$, and since the boxes are separated a distance $R_0$, we have
\begin{equation}
\sum_{n\in \mathbb{Z}^2, n\neq 0}  \int v(|x-y|) \rho_\Gamma(x)\rho_\Gamma(y+n \ell) \de x  \de y \leq \frac {\rho^2 (\ell - R_0)^2}{ ( 1- R_0/\ell -2b/\ell)^{4}} \int_{|x|>R_0} v(|x|) \de x .
\end{equation}
We shall bound the right side  as 
\begin{equation}\label{244}
\int_{|x|>R_0} v(|x|) \de x \leq [\ln (R_0/a)]^{-2} \int_{|x|>R_0} v(|x|) [\ln (|x|/a)]^2 \de x 
\end{equation}
for $R_0>a$, and recall that the last integral is finite for interaction potentials $v$ with finite scattering length (and hence goes to zero as $R_0$ becomes large).

We  conclude that
\begin{align} \nonumber 
\ell ^2 f(\beta,\rho)  & \leq   \Tr_{\mathcal{F}}  \left( \mathbb{H}_{\Lambda_{\ell-R_0-2b}}^{\mathrm{per}} \Gamma_{\mathrm{P}}\right)    -  \frac 1 \beta S(\Gamma_{\mathrm{P}})  + \frac{4}{b^2} \rho \ell^2   \\ & \quad  + \frac 12 \frac {\rho^2 (\ell - R_0)^2}{ ( 1- R_0/\ell -2b/\ell)^{4}} \frac {  1 } {  [\ln (R_0/a)]^{2}}  \int_{|x|>R_0} v(|x|) [\ln (|x|/a)]^2 \de x .
\label{eqn-upper-bound-periodic-b}
\end{align}
We are left with the task of finding an upper bound on the free energy of a finite system of size $\ell - R_0 - 2b$ with periodic boundary conditions, containing an average particle number $\rho \ell^2$. This will be done in the next section. The trial state that we will use is constructed from a Gibbs state of a non-interacting gas, a manually added condensate and a product function introducing the appropriate correlations due to the particle interactions.

\subsection{Periodic trial state on a finite box}

Denote $\tilde \ell = \ell - R_0 -2b$ for simplicity, and consider the Gibbs state of an ideal Bose gas
\begin{equation}\label{def:la}
\Gamma_{\mathrm{G}} = \sum_\alpha \lambda_\alpha |\psi_\alpha \rangle \langle \psi_\alpha|, \quad \lambda_\alpha = \frac{\e^{-\beta (E_\alpha - \mu N_\alpha)}}{\sum_{\alpha'} \e^{-\beta (E_{\alpha'} - \mu N_{\alpha'})}}
\end{equation}
where $E_\alpha$ and $N_\alpha$ are the  eigenvalues of $\mathbb{H}_0 := \bigoplus_N \sum_{i=1}^N (-\Delta^{\mathrm{per}}_{i,\tilde \ell})$ and the number operator $\Nn =\bigoplus_N N$ for an eigenstate $\psi_\alpha$, and $\mu<0$ is chosen such that $n_{\mathrm{G}} := \sum_\alpha \lambda_\alpha N_\alpha \leq n = \rho\ell^2$. We will in fact take 
\begin{equation}
n_{\mathrm{G}} = n  \min \left\{ 1, \frac {\beta_\mathrm{c}}{\beta} \right\} = \ell^2 ( \rho -\rho_{\mathrm{s}}) 
\end{equation}
with $\beta_\mathrm{c}$ defined in \eqref{eqn-def-beta-c}. 

For reasons that will become apparent below, we introduce a cutoff on the number of particles in $\Gamma_{\mathrm{G}}$ by restricting the sum in \eqref{def:la} to the set
\begin{equation}
\mathcal{A} = \{ \alpha : N_\alpha < \mathcal{N} \}
\end{equation}
for some parameter $\mathcal{N}>0$  to be chosen later. In order for the  state to still have trace one, we need to modify the coefficients $\lambda_\alpha$ and use instead
\begin{equation}	\label{def:lat}
\tilde \lambda_\alpha = \frac{\lambda_\alpha}{\sum_{\alpha' \in \mathcal{A}} \lambda_{\alpha'}}
\end{equation}
satisfying $\sum_{\alpha \in \mathcal{A}} \tilde \lambda_\alpha = 1$. 

We use the notation $a_p$ and $a_p^\dagger$ for the annihilation and creation operators of a plane wave of momentum $p$ on Fock space. 
For $z \in \Cc$, $D_z$ denotes the coherent state (Weyl) operator for the $p=0$ mode
\begin{equation}
D_z = \exp \left(z a_0^\dagger - \overline{z} a_0 \right).
\end{equation}
It acts as a shift operator on the $p=0$ mode creation/annihilation operators and as identity on the other modes, 
\begin{equation}\label{Ws}
D_z^\dagger a_p D_z = a_p + z \delta_{p,0}.
\end{equation}

The trial state we shall use is 
\begin{equation} \label{eqn-def-trial-state}
\Gamma_{\mathrm{P}} = \sum_{\alpha \in \mathcal{A}} \tilde \lambda_\alpha \frac{ |f D_z \psi_\alpha \rangle \langle f D_z \psi_\alpha|}{\| f D_z \psi_\alpha \|^2}. 
\end{equation}
Here, $f$ is an operator on Fock space that acts in the sector of particle number $k \geq 2$ as
\begin{equation}
f_k =  f P_k = \prod_{i<j}^k g(d(x_i,x_j)) P_k,
\end{equation}
where $P_k$ is the projection onto particle number $k$, $g(r) = g_0(r)$ for $r \leq R$, $g(r) = 1$ for $r > R$ and $g_0$ is the minimizer  of  \eqref{eqn-0-energy-scattering}  with boundary condition $g_0(R) = 1$. For $k\in\{0,1\}$ we define $f_k$ to be the identity operator. The parameter $R$ will be chosen large compared to $a$ but small compared to the mean particle spacing, i.e., 
$ a \ll R \ll \rho^{-1/2}$. 

In order for $\Gamma_{\mathrm{P}}$ to have the required average particle number, we need to choose $\mu<0$ and $z\in \mathbb{C}$ such that
\begin{equation}
\Tr_{\mathcal{F}} \Nn \Gamma_{\mathrm{P}} =  \sum_{\alpha \in \mathcal{A}} \tilde \lambda_\alpha N_\alpha + |z|^2 \stackrel{!}{=} n. 
\end{equation}
The total particle number is given as the sum of particles in the (modified) thermal Gibbs state, $\tilde n_{\mathrm{G}} := \sum_{\alpha \in \mathcal{A}} \tilde \lambda_\alpha N_\alpha$, and $|z|^2$ particles in the added condensate. Since
\begin{equation}
\sum_{\alpha,\alpha'} ( N_\alpha - N_{\alpha'} ) ( \chi_{N_\alpha < \mathcal{N} } - \chi_{N_{\alpha'} < \mathcal{N} }) \lambda_\alpha \lambda_{\alpha'} \leq 0 
\end{equation}
we have $\tilde n_{\mathrm{G}} \leq n_{\mathrm{G}}$, hence $|z|^2 \geq n - n_{\mathrm{G}} \geq 0$.

We divide the calculation of the upper bound on the free energy of the trial state $\Gamma_{\mathrm{P}}$ into four lemmas. We start with an estimate on the norms appearing in the denominator in \eqref{eqn-def-trial-state}.

\begin{lem} \label{lem-norm-estimate}
	For all $\alpha \in \mathcal{A}$, we have the lower bound
	\begin{equation}\label{def:B1}
	\| f D_z \psi_\alpha \|^2 \geq 1 - \frac{\pi R^2}{2 \tilde \ell^2} \left( |z|^4 + 4 |z|^2 \mathcal{N} + 2 \mathcal{N}^2\right) =: \frac 1{B_1}.
	\end{equation}
\end{lem}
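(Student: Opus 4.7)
The plan is to bound $I - f^2$ from above by a two-body operator and then compute its expectation in the coherent-shifted occupation number eigenstate $D_z\psi_\alpha$ directly.

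First I would observe that since $g_0$ is nondecreasing with $g_0(R)=1$, we have $0 \leq g \leq 1$, so $\|fD_z\psi_\alpha\|\leq 1$ and $1-f^2\geq 0$. The elementary inequality
\begin{equation*}
1 - \prod_{i<j}^k g^2(d(x_i,x_j)) \leq \sum_{i<j}^k \bigl(1 - g^2(d(x_i,x_j))\bigr),
\end{equation*}
valid for factors in $[0,1]$, translates into the second-quantized operator bound $I - f^2 \leq \hat W$ with
\begin{equation*}
\hat W = \tfrac12 \int \de x\,\de y\, w(x,y)\, a^\dagger(x) a^\dagger(y) a(y) a(x),\qquad w(x,y) := 1 - g^2(d(x,y)),
\end{equation*}
supported on the strip $\{d(x,y)\leq R\}$ where $0\leq w\leq 1$. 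Thus $\|fD_z\psi_\alpha\|^2 \geq 1 - \langle D_z\psi_\alpha, \hat W D_z\psi_\alpha\rangle$.

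Next I would use the Weyl shift \eqref{Ws}, which in position space reads $D_z^\dagger a(x) D_z = a(x) + z/\tilde\ell$ (the zero-momentum mode has normalized wavefunction $1/\tilde\ell$). Expanding the four factors produces $16$ normal-ordered terms. Since $\psi_\alpha$ is an occupation-number eigenstate, only monomials with equal numbers of creation and annihilation operators in each mode contribute in expectation: odd powers of $z$ and the pure $\bar z^2 a^\dagger a^\dagger$ / $z^2 a\,a$ terms all vanish. What remains are precisely three pieces:
\begin{equation*}
\tfrac{|z|^4}{\tilde\ell^4},\qquad \tfrac{|z|^2}{\tilde\ell^2}\bigl(a^\dagger(x)a(x)+a^\dagger(y)a(y)+a^\dagger(x)a(y)+a^\dagger(y)a(x)\bigr),\qquad a^\dagger(x)a^\dagger(y)a(y)a(x).
\end{equation*}

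For each piece I would bound the integral $\int\de x\,\de y\, w(x,y)(\cdots)$ using translation invariance of $\psi_\alpha$. The constant piece gives $\tilde\ell^2 \int_{|r|\leq R}w(|r|)\de r \leq \pi R^2 \tilde\ell^2$, contributing $\pi R^2|z|^4/\tilde\ell^2$. For the quadratic terms, writing the one-particle density matrix as $\gamma(r)=\tilde\ell^{-2}\sum_p n_p(\alpha)e^{ipr}$ and using $|\gamma(r)|\leq N_\alpha/\tilde\ell^2$ (together with $\gamma(0)=N_\alpha/\tilde\ell^2$ for the diagonal pieces), each of the four terms is bounded by $\pi R^2 N_\alpha$, giving a combined $4\pi R^2|z|^2 N_\alpha/\tilde\ell^2$. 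For the four-point piece, a direct computation of $\rho_2(x,y)=\langle\psi_\alpha,a^\dagger(x)a^\dagger(y)a(y)a(x)\psi_\alpha\rangle$ on an occupation-number eigenstate yields $\rho_2(x,y)=|\gamma(x-y)|^2+N_\alpha^2/\tilde\ell^4-(\sum_p n_p^2+N_\alpha)/\tilde\ell^4\leq |\gamma(x-y)|^2+N_\alpha^2/\tilde\ell^4$, and each of these two terms integrates to at most $\pi R^2 N_\alpha^2/\tilde\ell^2$.

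Collecting the contributions, multiplying by the $1/2$ in front of $\hat W$, and using $N_\alpha < \mathcal N$ for $\alpha\in\mathcal A$, gives
\begin{equation*}
\langle D_z\psi_\alpha,\hat W D_z\psi_\alpha\rangle \leq \frac{\pi R^2}{2\tilde\ell^2}\bigl(|z|^4 + 4|z|^2\mathcal N + 2\mathcal N^2\bigr),
\end{equation*}
which is exactly the claim. I expect the main obstacle to be the combinatorial bookkeeping of the $16$-term expansion of $D_z^\dagger \hat W D_z$ and the precise factors in $\rho_2$, since these determine the coefficients $1$, $4$, $2$ in front of $|z|^4$, $|z|^2\mathcal N$, $\mathcal N^2$; everything else is elementary once the support condition $d(x,y)\leq R$ is exploited to extract the $\pi R^2/\tilde\ell^2$ factor.
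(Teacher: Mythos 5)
Your proof is correct and follows essentially the same route as the paper's: bound $1-\prod g^2$ by $\sum(1-g^2)$, reduce to an integral of $w=1-g^2$ against the two-particle density of $D_z\psi_\alpha$, expand via the Weyl shift into the $|z|^4$, $|z|^2$, and four-operator pieces, and bound each using $|\gamma|\le N_\alpha/\tilde\ell^2$, $\rho^{(2)}_\alpha\le 2N_\alpha^2/\tilde\ell^4$, and $\int w\le\pi R^2$. The only difference is presentational: you phrase the reduction through a second-quantized operator $\hat W$ and expand $D_z^\dagger\hat W D_z$ term by term, whereas the paper works sector by sector with the explicit formula for the two-particle density of $D_z\psi_\alpha$; your exact identity for $\rho_2$ on an occupation-number eigenstate is slightly more refined than the paper's direct momentum-space estimate, but leads to the same bound $2N_\alpha^2/\tilde\ell^4$.
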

	
\begin{proof}
		We write $g(t)^2 = 1 - \eta(t)$ with the function $\eta$ having  support in $[0,R]$ and taking values between zero and one. Thus we have 
		\begin{align*}
		\| f D_z \psi_\alpha \|^2 
		&= \sum_m \int | (D_z \psi_\alpha)_m |^2 \prod_{i<j}^m ( 1 - \eta(d(x_i,x_j))) \de X_m, \asterisknum
		\end{align*}
		where $(D_z\psi_\alpha)_m$ denotes the $m$-particle component of the Fock space vector $D_z \psi_\alpha$, and $\de X_m$ is short for $\prod_{k=1}^m \de x_k$. 
		Since the $\psi_\alpha$ are normalized and $D_z$ is unitary, we can bound
		\begin{equation}
		\| f D_z \psi_\alpha \|^2 \geq 1 - \sum_m \sum_{i<j}^m \int |(D_z \psi_\alpha)_m|^2 \eta(d(x_i , x_j)) \de X_m.
		\end{equation}
		With $\rho_{\alpha,z}^{(2)}$ denoting the two-particle density of $D_z \psi_\alpha$, we have
		\begin{equation}
		\sum_m \sum_{i<j}^m \int |(D_z \psi_\alpha)_m|^2 \eta(d(x_i , x_j)) \de X_m = \frac 1 2 \int \eta(d(x, y)) \rho_{\alpha,z}^{(2)}(x,y) \de x \de y.
		\end{equation}
		In terms of the two-particle density $\rho^{(2)}_\alpha$ of $\psi_\alpha$, its one-particle density matrix $\gamma_\alpha$  and the corresponding density $\rho_\alpha(x) = \gamma_\alpha(x,x) = N_\alpha \tilde\ell^{-2}$, it can be expressed as 
		\begin{equation}
		\rho_{\alpha,z}^{(2)}(x,y)  = |z|^4 \tilde \ell^{-4} + |z|^2 \tilde \ell^{-2} \left( \gamma_\alpha(x,y) + \rho_\alpha(x) + \rho_\alpha(y) + \gamma_\alpha(y,x) \right) + \rho_\alpha^{(2)}(x,y).
		\end{equation}
		We can bound $|\gamma_\alpha(x,y)|^2  \leq  \rho_\alpha(x)\rho_\alpha(y) = N_\alpha^2 \tilde\ell^{-4}$, as well as  
		\begin{align*}
		\rho_\alpha^{(2)}(x,y) & = \frac{1}{\tilde \ell^4} \sum_{p_1,p_2,p_3,p_4} \e^{i p_1 x} \e^{i p_2 y} \e^{-i p_3 y} \e^{-i p_4 x} \braket{a^\dagger_{p_4} a^\dagger_{p_3} a_{p_2} a_{p_1}}_{\psi_\alpha}\\
		&= \frac{1}{\tilde \ell^4} \sum_p \braket{a^\dagger_p a^\dagger_p a_p a_p}_{\psi_\alpha} + \frac{1}{\tilde \ell^4} \sum_{p_1 \neq p_2} \left( 1 + \e^{i (p_1 - p_2) (x - y)} \right) \braket{a^\dagger_{p_1} a^\dagger_{p_2} a_{p_2} a_{p_1}}_{\psi_\alpha}\\
		&\leq \frac{1}{\tilde \ell^4} \sum_p n_p (n_p - 1) + \frac{2}{\tilde \ell^4} \sum_{p_1 \neq p_2}  n_{p_1} n_{p_2}\\
		&\leq \frac{2}{\tilde \ell^4} \sum_{p_1, p_2} n_{p_1} n_{p_2} = 2 \frac{N_\alpha^2}{\tilde \ell^4}, \asterisknum
		\end{align*}
		where we denoted by $n_p$ the occupation numbers $a^\dagger_p a_p \psi_\alpha = n_p \psi_\alpha$. 
		For $\alpha \in \mathcal{A}$, we have the uniform bound $N_\alpha < \mathcal{N}$ and hence
		\begin{equation}
		\| f D_z \psi_\alpha \|^2 \geq 1 - \frac{1}{2 \tilde \ell^2} \left( |z|^4 + 4 |z|^2 \mathcal{N} + 2 \mathcal{N}^2 \right) \int_{\mathbb{R}^2} \eta(|x|) \de x \geq 1 - \frac{\pi R^2}{2 \tilde \ell^2} \left( |z|^4 + 2 |z|^2 \mathcal{N} + 2 \mathcal{N}^2 \right).
		\end{equation}
		In the last inequality we estimated $\eta \leq 1$ on the disk of radius $R$. 
\end{proof}

The second lemma concerns the weights of the Gibbs state $\Gamma_{\mathrm{G}}$ restricted to the  set $\mathcal{A}$. We introduce the function
\begin{equation}
	\tau(\lambda,k) = \frac{  \e^{-\lambda} - 1}{k \lambda} \ln \left( 1 - \frac{\e^{-k \lambda}-1}{\e^{-\lambda}-1} \right)
\end{equation}
for $\lambda<0$ and $0<k<1$. 

\begin{lem} \label{lem-number-estimate}
	The restricted sum of the eigenvalues of the Gibbs state satisfies
	\begin{equation}\label{B2b}
	\sum_{\alpha \in \mathcal{A}} \lambda_\alpha \geq 1 - \exp \left( - k \beta |\mu| \left( \mathcal{N} - \tau(\beta \mu,k) n_{\mathrm{G}} \right) \right) =: \frac 1{B_2} 
	\end{equation}
	for any $0<k<1$. 
\end{lem}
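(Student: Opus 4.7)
The plan is to view $\sum_{\alpha\in\mathcal{A}}\lambda_\alpha = 1 - \sum_{\alpha:\, N_\alpha\geq \mathcal{N}}\lambda_\alpha$ and to control the tail by a Chernoff-type estimate applied to the number operator under the ideal-gas Gibbs measure $\Gamma_{\mathrm{G}}$. For any $t\in (0,\beta|\mu|)$, Markov's inequality yields
\begin{equation*}
\sum_{\alpha:\, N_\alpha\geq \mathcal{N}}\lambda_\alpha \;\leq\; \e^{-t\mathcal{N}}\,\Tr_{\mathcal{F}}\e^{t\Nn}\Gamma_{\mathrm{G}}.
\end{equation*}
Since $\Gamma_{\mathrm{G}}$ is the grand-canonical Gibbs state of the non-interacting gas, the trace on the right equals the ratio of ideal-gas partition functions at chemical potentials $\mu+t/\beta$ and $\mu$, and therefore factorizes as $\prod_p (1-y_p)/(1-y_p\e^t)$ with $y_p = \e^{\beta(\mu-\varepsilon_p)}$, where $\varepsilon_p$ runs over the single-particle energies. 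The restriction $t<\beta|\mu|$ is exactly what ensures convergence at the zero-momentum mode.

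I next re-express each factor through the mean occupation number $n_p = y_p/(1-y_p) = (\e^{\beta(\varepsilon_p-\mu)}-1)^{-1}$, which sum to $n_{\mathrm{G}}$, using the identity
\begin{equation*}
\ln\frac{1-y_p}{1-y_p\e^t} \;=\; -\ln\!\bigl(1-n_p(\e^t-1)\bigr).
\end{equation*}
The key analytic ingredient is the elementary monotonicity statement that $n\mapsto -n^{-1}\ln(1-cn)$ is increasing on $(0,1/c)$ for any fixed $c>0$, verified by a short differentiation. Applying this at $c=\e^t-1$ and using that $n_p$ is largest at $p=0$ with $n_0 = (\e^{\beta|\mu|}-1)^{-1}$, I pull out the worst ratio to obtain
\begin{equation*}
\ln\Tr_{\mathcal{F}}\e^{t\Nn}\Gamma_{\mathrm{G}} \;=\; \sum_p\bigl[-\ln\!\bigl(1-n_p(\e^t-1)\bigr)\bigr] \;\leq\; n_{\mathrm{G}}\cdot\frac{-\ln\!\bigl(1-n_0(\e^t-1)\bigr)}{n_0}.
\end{equation*}

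To conclude, I choose $t = -k\beta\mu = k\beta|\mu|$ with $k\in(0,1)$, which lies in the admissible range and, writing $\lambda=\beta\mu$, gives $n_0(\e^t-1) = (\e^{-k\lambda}-1)/(\e^{-\lambda}-1) < 1$. A brief algebraic manipulation then matches
\begin{equation*}
\frac{1}{t}\cdot\frac{-\ln\!\bigl(1-n_0(\e^t-1)\bigr)}{n_0} \;=\; \tau(\beta\mu,k),
\end{equation*}
so the Chernoff bound becomes $\sum_{\alpha:\, N_\alpha\geq \mathcal{N}}\lambda_\alpha \leq \exp\!\bigl(-k\beta|\mu|(\mathcal{N}-\tau(\beta\mu,k)n_{\mathrm{G}})\bigr)$, from which \eqref{B2b} follows by subtraction from $1$. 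The only step that requires genuine work is the monotonicity of $n\mapsto -n^{-1}\ln(1-cn)$, which is what singles out the precise form of $\tau$; everything else is bookkeeping with ideal-gas partition functions.
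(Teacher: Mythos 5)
Your proof is correct and follows essentially the same route as the paper: both use a Chernoff bound on $\braket{\chi_{\Nn\geq\mathcal{N}}}_{\Gamma_{\mathrm{G}}}$ with exponential parameter $t=k\beta|\mu|$, evaluate the moment generating function as a ratio of ideal-gas partition functions, and then linearize $-\ln(1-x)$ by the chord through the origin and the worst mode $p=0$. Your monotonicity of $n\mapsto -n^{-1}\ln(1-cn)$ is precisely the convexity fact ($x\mapsto -\ln(1-x)/x$ increasing) that the paper uses to select the optimal slope $\eta$, so the two proofs coincide step for step.
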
	
\begin{proof}
		We have
		\begin{equation}
		\sum_{\alpha \in\mathcal{A}} \lambda_\alpha = 1 - \sum_{\alpha \notin \mathcal{A}} \lambda_\alpha = 1 - \braket{\chi_{\Nn \geq \mathcal{N}}}_{\Gamma_{\mathrm{G}}}.
		\end{equation}
		The characteristic function can be bounded by an exponential function with parameter $\varkappa > 0$ as
		\begin{equation}
		\braket{\chi_{\Nn \geq \mathcal{N}}}_{\Gamma_{\mathrm{G}}} \leq \braket{\e^{\varkappa(\Nn - \mathcal{N})}}_{\Gamma_{\mathrm{G}}}.
		\end{equation}
		The latter expectation is readily obtained as 
		\begin{equation}
		\braket{\e^{\varkappa(\Nn - \mathcal{N})}}_{\Gamma_G} = \exp \left( \beta \tilde \ell^2 \left[ P_{\tilde \ell}(\beta,\mu + \varkappa/\beta) - P_{\tilde \ell}(\beta,\mu) \right] - \varkappa \mathcal{N} \right) 
		\end{equation}
		where  $P_{\tilde \ell}(\beta,\mu)$ denotes the grand canonical pressure of the ideal Bose gas in a finite volume, and we need to choose $\varkappa$ such that $\mu+ \varkappa/\beta<0$. An explicit computation gives 
		\begin{equation}
		P_{\tilde \ell}(\beta,\mu) = -\frac{1}{\beta \tilde \ell^2} \sum_{p \in (2\pi/\tilde \ell) \Zz^2} \ln \left( 1 - \e^{-\beta(p^2 - \mu)} \right). 
		\end{equation}
		Hence we find for the difference
		\begin{equation}
		P_{\tilde \ell}(\beta,\mu+\varkappa/\beta) - P_{\tilde \ell}(\beta,\mu) = - \frac{1}{\beta \tilde \ell^2} \sum_{p \in (2\pi/\tilde \ell) \Zz^2} \ln \left( 1 - \frac{\e^\varkappa - 1}{\e^{\beta(p^2 - \mu)} - 1} \right). 
		\end{equation}
		We shall choose $\varkappa = - k \beta \mu$ for $0<k<1$. It will be convenient to estimate $- \ln(1 - x) \leq \eta x$ where $\eta$ is chosen such that equality occurs for the largest $x$ under consideration. In our case that means
		\begin{equation}
		\eta = -  \frac{\e^{- \beta \mu} - 1}{\e^{-k\beta \mu} - 1}  \ln \left( 1 - \frac{\e^{- k \beta\mu} - 1}{\e^{-\beta \mu} - 1} \right).
		\end{equation}
		The sum over $p$ can then be evaluated as the density of the Gibbs state $\Gamma_{\mathrm{G}}$, 
		\begin{equation}
		P_{\tilde \ell}(\beta,\mu(1-k)) - P_{\tilde \ell}(\beta,\mu) \leq \frac{1}{\beta \tilde \ell^2} \sum_{p \in (2\pi/\tilde \ell)\Zz^2} \eta \frac{\e^{-k \beta \mu} - 1}{\e^{\beta(p^2 - \mu)}-1}
		 = \eta \left( \e^{-k \beta \mu} - 1 \right) \frac{n_{\mathrm{G}}}{\beta \tilde \ell^2} = -k \beta \mu  \tau(\beta\mu,k) \frac{n_{\mathrm{G}}}{\beta \tilde \ell^2}. 
		\end{equation}
		Hence the bound \eqref{B2b} follows.
\end{proof}

We remark that the same proof also shows that 
\begin{equation}
\sum_{\alpha\not\in \mathcal{A}} \lambda_\alpha N_\alpha \leq  \left( \mathcal{N} + \frac 1{k\beta|\mu|}\right) \exp \left( - k \beta |\mu| \left( \mathcal{N} - \tau(\beta \mu,k) n_{\mathrm{G}} \right) \right).
\end{equation}
To see this, one simply bounds
\begin{equation}
( \Nn - \mathcal{N} )\chi_{\Nn \geq \mathcal{N}} \leq \varkappa^{-1} \e^{\varkappa(\Nn - \mathcal{N})}
\end{equation}
for any $\varkappa>0$, and then proceeds as above. This allows us to derive a lower bound on $\tilde n_{\mathrm{G}}$ in terms of $n_{\mathrm{G}}$, 
\begin{align}\nonumber
\tilde n_{\mathrm{G}} & = \sum_{\alpha \in \mathcal{A}} \tilde\lambda_\alpha N_\alpha \geq \sum_{\alpha \in \mathcal{A}} \lambda_\alpha N_\alpha = n_{\mathrm{G}} - \sum_{\alpha \not\in \mathcal{A}} \lambda_\alpha N_\alpha
\\ & \geq n_{\mathrm{G}} - \left( \mathcal{N} + \frac 1{k\beta|\mu|}\right) \exp \left( - k \beta |\mu| \left( \mathcal{N} - \tau(\beta \mu,k) n_{\mathrm{G}} \right) \right), \label{nnt}
\end{align}
which will be useful later. 

In the third lemma we shall estimate the expectation value of the Hamiltonian $\mathbb{H}_{\Lambda_{\tilde \ell}}^{\mathrm{per}}$ in our trial state $\Gamma_{\mathrm{P}}$. To simplify the notation, we shall just write $\mathbb{H}$ for $\mathbb{H}_{\Lambda_{\tilde\ell}}^{\mathrm{per}} $, and denote the restrictions of $\mathbb{H}$ to the sector of particle number $m$ by $H_m$. 
We have
\begin{equation} \label{eqn-exp-energy}
\Tr_{\mathcal{F}} \mathbb{H} \Gamma_{\mathrm{P}} = \sum_{\alpha \in \mathcal{A}} \frac{\tilde \lambda_\alpha}{\| f D_z \psi_\alpha \|^2} \sum_{m} \int f_{m} \overline{(D_z \psi_\alpha)}_{m} H_{m} f_{m} (D_z \psi_\alpha)_{m} \de X_{m}, 
\end{equation}
where we denote again $\de X_{m} = \prod_{k=1}^m \de x_k$.  We have to evaluate the integrals
\begin{align*} \label{eqn-calc-ibp-1}
\int &f_{m} \overline{(D_z \psi_\alpha)}_{m} H_{m} f_{m} (D_z \psi_\alpha)_{m} \de X_{m}\\
&= \int | \nabla f_{m} (D_z \psi_\alpha)_{m}|^2 \de X_{m} + \sum_{i<j}^{m} \int v_{\mathrm{per}}(x_i- x_j) | f_{m} (D_z \psi_\alpha)_{m}|^2 \de X_{m}\asterisknum
\end{align*}
where $\nabla$ denotes the gradient with respect to $X_m=(x_1,\dots,x_m)$. 
Using integration by parts, the first term can be rewritten as
\begin{equation} \label{eqn-calc-ibp-2}
\int | \nabla f_{m} (D_z \psi_\alpha)_{m}|^2 \de X_{m} = \int \left( | \nabla f_{m}|^2 | (D_z \psi_\alpha)_{m}|^2 - f_{m}^2 \overline{(D_z \psi_\alpha)}_{m} \nabla^2 (D_z \psi_\alpha)_{m} \right) \de X_{m}.
\end{equation}
 The first term on the right-hand side, together with the potential term in \eqref{eqn-calc-ibp-1}, will yield the leading order correction to the free energy, while the second term will give the main contribution. We define
\begin{equation}
\mathcal{E} := \sum_{\alpha \in \mathcal{A}} \frac{\tilde \lambda_\alpha}{\| f D_z \psi_\alpha \|^2} \sum_m \int f_m^2 \overline{(D_z \psi_\alpha)}_m (- \nabla^2) (D_z \psi_\alpha)_m \de X_m.
\end{equation}
The ideal gas Hamiltonian $\mathbb{H}_0$ does not distinguish between a state with or without added quasi-condensate, since the latter carries no kinetic energy. In other words,  if $E_\alpha$ is the eigenvalue of $\psi_\alpha$, then we have 
\begin{equation}
\mathbb{H}_0 D_z \psi_\alpha = E_\alpha D_z \psi_\alpha.
\end{equation}
Hence
\begin{equation} \label{eqn-free-hamiltonian-energy}
\mathcal{E} = \sum_{\alpha \in \mathcal{A}} \frac{\tilde \lambda_\alpha}{\| f D_z \psi_\alpha \|^2} \sum_m \int f_m^2 \overline{(D_z \psi_\alpha)}_m E_\alpha (D_z \psi_\alpha)_m \de X_m = 
\sum_{\alpha \in \mathcal{A}} \tilde\lambda_\alpha E_\alpha.
\end{equation}

Next we evaluate the first term on the right-hand side of \eqref{eqn-calc-ibp-2}, proceeding similarly as in \cite{DS19,DSY}.   We have
\begin{equation} \label{eqn-calc-grad-1}
\nabla_{x_k} f_{m} = \nabla_{x_k} \prod_{i<j}^{m} g(d(x_i, x_j))=  f_m \sum_{l,l\neq k} \frac{\nabla_{x_k} g(d(x_l,x_k))}{g(d(x_l,x_k))}  .
\end{equation}
Hence the square of the gradient of $f_{m}$ is given by
\begin{equation} \label{eqn-gradient-split}
| \nabla f_m |^2 =  2 f_m^2 \sum_{l < k} \left(  \frac{g'(d(x_l,x_k))}{g(d(x_l,x_k))}  \right)^2 + f_m^2 \sum_k \sum_{\substack{l,l' \neq k\\ l \neq l'}}  \frac{\nabla_{x_k} g(d(x_l,x_k))\cdot \nabla_{x_k} g(d(x_{l'},x_k))}{g(d(x_l,x_k)) g(d(x_{l'},x_k))} .
\end{equation}
The first term contains the square of the derivative of $g$ and is needed to obtain the correct  interaction energy. The factor $2$ arises from restricting the sum to $l<k$ instead of $l\neq k$.  We further define 
\begin{align}\nonumber
\mathcal{I} &:= \sum_{\alpha \in \mathcal{A}} \frac{\tilde \lambda_\alpha}{\| f D_z \psi_\alpha \|^2} \sum_m \sum_{i<j}  \int \frac{ 2g'(d(x_i,x_j))^2 + v_{\mathrm{per}}(x_i - x_j) g(d(x_i,x_j))^2 }{g(d(x_i,x_j))^2} f_m^2   | (D_z \psi_\alpha)_m |^2 \de X_m
\\ 
\mathcal{R} &:= \sum_{\alpha \in \mathcal{A}} \frac{\tilde \lambda_\alpha}{\| f D_z \psi_\alpha \|^2} \sum_m \sum_k \sum_{\substack{l,l' \neq k\\ l \neq l'}} \int  f_m^2 \frac{\nabla_{x_k} g(d(x_l,x_k))\cdot \nabla_{x_k} g(d(x_{l'},x_k))}{g(d(x_l,x_k))g(d(x_{l'},x_k))}  | (D_z \psi_\alpha)_m |^2 \de X_{m}. \label{IR}
\end{align}

\begin{lem} \label{lem-energy-estimate}
	With the definitions above, we have
	\begin{equation}
	\Tr_{\mathcal{F}} \mathbb{H} \Gamma_{\mathrm{P}} = \mathcal{E + I + R},
	\end{equation}
	where $\mathcal{E}$ is given in \eqref{eqn-free-hamiltonian-energy}, 
	\begin{equation}\label{eqn-estimate-I}
		\mathcal{I} \leq \frac{B_1 B_2}{ \tilde \ell^{2}}  \left( ( |z|^2 + n_{\mathrm{G}})^2 - \tfrac 12 |z|^4  \right)  \left( \frac{4\pi}{\ln(R/a)} +  \int_{|x|> R} v(|x|) \de x \right) 
	\end{equation}
and
\begin{equation}\label{bound:R}
		\mathcal{R} \leq 24 \pi^2  B_1 B_2  \frac{(n_{\mathrm{G}} + |z|^2)^3}{\tilde \ell^4} \frac{R^2}{[\ln(R/a)]^2}, 
		\end{equation}
	with $B_1$ and $B_2$ defined in Lemma~\ref{lem-norm-estimate} and \ref{lem-number-estimate}, respectively.
\end{lem}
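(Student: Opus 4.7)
The identity $\Tr_{\mathcal{F}}\mathbb{H}\Gamma_{\mathrm{P}}=\mathcal{E}+\mathcal{I}+\mathcal{R}$ is already in hand from the computation leading up to the lemma: integration by parts in \eqref{eqn-calc-ibp-2} isolates $\mathcal{E}$ via \eqref{eqn-free-hamiltonian-energy}, and the splitting \eqref{eqn-gradient-split} of $|\nabla f_m|^2$ separates the remainder into the diagonal piece grouped with the interaction (which is $\mathcal{I}$) and the off-diagonal piece $\mathcal{R}$. So the plan is to prove the two bounds \eqref{eqn-estimate-I} and \eqref{bound:R}.

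For $\mathcal{I}$, I would first apply Lemma~\ref{lem-norm-estimate} to replace $\|fD_z\psi_\alpha\|^{-2}$ by $B_1$, and then discard the spectator correlation factors by noting that $f_m^2/g(d(x_i,x_j))^2$, the product of $g^2$-factors over all pairs other than $(i,j)$, is bounded by $1$. The sum over pairs $i<j$ then turns the expression into $\tfrac12$ times the integral of $2g'(d(x,y))^2+v_{\mathrm{per}}(x-y)g(d(x,y))^2$ against the two-body density $\rho^{(2)}_{\alpha,z}$ of $D_z\psi_\alpha$. I would use the coherent-state computation already performed inside the proof of Lemma~\ref{lem-norm-estimate}, together with the pointwise bounds $|\gamma_\alpha(x,y)|\le N_\alpha/\tilde\ell^2$ and $\rho^{(2)}_\alpha\le 2N_\alpha^2/\tilde\ell^4$ established there, to obtain $\rho^{(2)}_{\alpha,z}(x,y)\le 2\tilde\ell^{-4}((|z|^2+N_\alpha)^2-\tfrac12|z|^4)$. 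Because the remaining integrand depends only on $d(x,y)$, one spatial integration contributes $\tilde\ell^2$. I would choose $R<\tilde\ell/2$ so that the torus distance coincides with the Euclidean distance on the support of $g'$; then the variational characterization of the scattering length around \eqref{eqn-0-energy-scattering} identifies $\int_{|x|<R}(2|\nabla g_0|^2+vg_0^2)\de x$ with $4\pi/\ln(R/a)$, while on $\{d(x,y)>R\}$ one has $g'=0$, $g=1$, and the remaining $v_{\mathrm{per}}$-integral is bounded by $\int_{|x|>R}v(|x|)\de x$ via $\int_{\Lambda_{\tilde\ell}}v_{\mathrm{per}}=\int_{\Rr^2}v$. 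A final averaging over $\alpha\in\mathcal{A}$, using $\tilde\lambda_\alpha=B_2\lambda_\alpha$ to extend the restricted sum to all $\alpha$ and $\tilde n_{\mathrm{G}}\le n_{\mathrm{G}}$, would produce the prefactor $B_1B_2$ and deliver \eqref{eqn-estimate-I}.

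For $\mathcal{R}$ the plan is analogous at the three-body level. The pointwise inequalities $f_m^2/(g(d(x_l,x_k))g(d(x_{l'},x_k)))\le 1$ and the Cauchy--Schwarz-type bound $|\nabla_{x_k}g(d(x_l,x_k))\cdot\nabla_{x_k}g(d(x_{l'},x_k))|\le g'(d(x_l,x_k))g'(d(x_{l'},x_k))$ would reduce $\mathcal{R}$ to an integral of a product of two radial derivatives of $g$ against the three-body density $\rho^{(3)}_{\alpha,z}$. I would expand $\rho^{(3)}_{\alpha,z}$ in powers of $z$ via $D_z^\dagger a_p D_z=a_p+z\delta_{p,0}$, producing $2^6$ terms which I would estimate individually using the same plane-wave arithmetic as in Lemma~\ref{lem-norm-estimate}; collecting them should yield the pointwise bound $\rho^{(3)}_{\alpha,z}\le 6\tilde\ell^{-6}(|z|^2+N_\alpha)^3$. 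With $x_k$ fixed, the two remaining integrations would factor and each would be controlled by Lemma~\ref{lem-g'}(iii) via $\int|g'(|x|)|\de x\le 2\pi R/\ln(R/a)$, while the integration over $x_k$ contributes the volume $\tilde\ell^2$. Averaging and keeping track of the combinatorial factor $6$ together with the $(2\pi)^2$ would produce the constant $24\pi^2$ in \eqref{bound:R}.

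The main technical obstacle will be the combinatorial bookkeeping in the three-body-density bound: verifying that the $2^6$ Wick-type contractions produced by the coherent-state shift collectively admit a clean pointwise estimate with the sharp constant $6$ requires careful tracking, though it is structurally a direct extension of the two-body analogue inside Lemma~\ref{lem-norm-estimate}. Once this is established, everything else is mechanical and uses only the scattering-length variational identity and Lemma~\ref{lem-g'}.
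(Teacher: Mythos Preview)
Your overall architecture is correct, and most of the individual steps (dropping spectator $g$-factors, the spatial integral via the scattering identity, Lemma~\ref{lem-g'}(iii) for the $g'$-integrals) match the paper. There is, however, a genuine gap in the order of operations for the reduced-density bounds.

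You propose to bound $\rho^{(2)}_{\alpha,z}(x,y)\le 2\tilde\ell^{-4}\bigl((|z|^2+N_\alpha)^2-\tfrac12|z|^4\bigr)$ for each $\alpha$ and \emph{then} average over $\alpha$. Averaging the right-hand side with weights $\lambda_\alpha$ produces $\tfrac12|z|^4+2|z|^2 n_{\mathrm G}+\langle \Nn^2\rangle_{\Gamma_{\mathrm G}}$, not $\tfrac12|z|^4+2|z|^2 n_{\mathrm G}+n_{\mathrm G}^2$; the two differ by $\mathrm{Var}_{\Gamma_{\mathrm G}}(\Nn)>0$, which the inequality $\tilde n_{\mathrm G}\le n_{\mathrm G}$ cannot absorb. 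The same issue arises at the three-body level: your route yields $\langle(|z|^2+N_\alpha)^3\rangle$ rather than $(|z|^2+n_{\mathrm G})^3$. So the bounds \eqref{eqn-estimate-I} and \eqref{bound:R} as stated do not follow from your plan.

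The paper avoids this by reversing the order: one first combines $\tilde\lambda_\alpha\|fD_z\psi_\alpha\|^{-2}\le B_1B_2\lambda_\alpha$ with positivity to extend the sum to all $\alpha$, obtaining directly the reduced densities $\rho^{(2)}_z$ and $\rho^{(3)}_z$ of the \emph{quasi-free} state $D_z\Gamma_{\mathrm G}D_z^\dagger$, and only then bounds them pointwise. Wick's theorem for $\Gamma_{\mathrm G}$ gives $\rho^{(2)}_z=|z|^4\tilde\ell^{-4}+|z|^2\tilde\ell^{-2}(\rho+\rho+\gamma+\bar\gamma)+\rho\rho+|\gamma|^2$, whence $\rho^{(2)}_z\le 2\tilde\ell^{-4}\bigl((|z|^2+n_{\mathrm G})^2-\tfrac12|z|^4\bigr)$ with $n_{\mathrm G}$ (not $N_\alpha$) appearing. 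Similarly, the $3!$ Wick permutations for $\rho^{(3)}_z$ give the clean bound $6\tilde\ell^{-6}(n_{\mathrm G}+|z|^2)^3$; there is no $2^6$-term combinatorial problem once one works with the averaged state rather than individual eigenstates.
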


\begin{proof}		
		We introduce the function
		\begin{equation}
		\xi(x,y) = g'(d(x,y))^2 + \tfrac 1 2 v_{\mathrm{per}}(x-y) g(d(x,y))^2.
		\end{equation}
		In the integrand in the first term in \eqref{IR}, we can bound $f_m(X_m)^2 \leq g(d(x_i,x_j))^2$, and then 
		 integrate out all but two variables to find the two-particle density. Using in addition that $\tilde \lambda_\alpha \| fD_z \psi_\alpha\|^{-2} \leq \lambda_\alpha B_1 B_2$ according to Lemmas~\ref{lem-norm-estimate} and \ref{lem-number-estimate}, this leads to 
\begin{equation} \label{eqn-I-estimate-1}
		\mathcal{I} \leq  B_1 B_2 \int \xi(x,y) \rho_z^{(2)}(x,y) \de x \de y ,
\end{equation}		
where $\rho_z^{(2)}$ denotes the two-particle density of $D_z \Gamma_{\mathrm{G}} D_z^\dagger$. Here we  also used that $\xi\geq 0$ to add the missing terms in  the sum over $\alpha$ to obtain the full Gibbs state. 
Using Wick's theorem for the quasi-free state $\Gamma_{\mathrm{G}}$ and \eqref{Ws}, we calculate
		\begin{equation}
		\rho_z^{(2)}(x,y) = |z|^4 \tilde \ell^{-4} + |z|^2 \tilde \ell^{-2}( \rho(x) + \rho(y) + \gamma(x,y) + \gamma(y,x)) + |\gamma(x,y)|^2 + \rho(x) \rho(y)
		\end{equation}
with $\gamma$ and $\rho$ the one-particle density matrix and corresponding density of $\Gamma_{\mathrm{G}}$. We have $\rho(x) = n_{\mathrm{G}} \tilde\ell^{-2}$ and $|\gamma(x,y)| \leq n_{\mathrm{G}}\tilde\ell^{-2}$, hence
		\begin{equation}
		\rho_z^{(2)}(x,y) \leq \frac{1}{\tilde \ell^4} \left(  |z|^4 + 4 |z|^2  n_{\mathrm{G}} + 2 n_{\mathrm{G}}^2 \right) = \frac{2}{\tilde \ell^4} \left( (|z|^2 + n_{\mathrm{G}})^2 - \tfrac 12 |z|^4  \right) .
		\end{equation}
For any fixed $y \in \Lambda_{\tilde \ell}$, 
\begin{equation}
\int_{\Lambda_{\tilde \ell}}  \xi(x,y)  \de x\leq  \int_{\mathbb{R}^2} \left( g'(|x|)^2 + \tfrac 1 2 v(|x|) g(|x|)^2 \right) \de x = \frac{2\pi}{\ln(R/a_R)} + \frac 12 \int_{|x|> R} v(|x|) \de x ,
\end{equation}
where we have used that $g$ is an increasing function in the first step, and that the minimum of \eqref{eqn-0-energy-scattering} equals $4\pi / \ln (R/a_R)$ in the second. 
Here, $a_R$ denotes the scattering length of the potential $v(|x|) \theta(R-|x|)$, which satisfies $a_R\leq a$, yielding \eqref{eqn-estimate-I}. 
		
		We proceed similarly for the three-particle term $\mathcal{R}$. In terms of the three-particle density $\rho^{(3)}_{z}$ of  $D_z \Gamma_{\mathrm{G}} D_z^\dagger$, we obtain
		\begin{equation}
		\mathcal{R} \leq  B_1 B_2 \int g'(d(x,z)) g'(d(y,z)) \rho_z^{(3)}(x,y,z) \de x \de y \de z. 
		\end{equation}
		With the aid of Wick's theorem and \eqref{Ws}, one readily finds the crude bound 
		\begin{equation}
		\rho_z^{(3)}(x,y,z) \leq 6 \frac{(n_{\mathrm{G}} + |z|^2)^3}{\tilde \ell^6}.
		\end{equation}
		Applying in addition part 3 of Lemma \ref{lem-g'}, we obtain \eqref{bound:R}. 		
\end{proof}

Finally, we need to estimate the entropy of the trial state $\Gamma_{\mathrm{P}}$ in order to obtain a bound on the free energy. 

\begin{lem} \label{lem-entropy-estimate}
	We have
	\begin{equation}
	S(\Gamma_{\mathrm{P}}) \geq  - \sum_{\alpha \in \mathcal{A}} \tilde \lambda_\alpha \ln \tilde \lambda_\alpha   - \ln B_1,
	\end{equation}
	where $B_1$ 
	is defined in Lemma~\ref{lem-norm-estimate}. 
\end{lem}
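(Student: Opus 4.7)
The plan is to reduce the question to an operator inequality on the finite-dimensional space $\mathbb{C}^{|\mathcal{A}|}$ and then apply operator monotonicity of the logarithm. Setting $\chi_\alpha = fD_z\psi_\alpha/\|fD_z\psi_\alpha\|$, we can write $\Gamma_{\mathrm{P}} = \sum_{\alpha\in\mathcal{A}}\tilde\lambda_\alpha|\chi_\alpha\rangle\langle\chi_\alpha|$ with the $\chi_\alpha$ normalized but not mutually orthogonal. Introducing the Gram matrix $G_{\alpha\beta} = \langle\chi_\alpha|\chi_\beta\rangle$ (so that $G_{\alpha\alpha}=1$) and the diagonal matrix $\Lambda = \operatorname{diag}(\tilde\lambda_\alpha)$ on $\mathbb{C}^{|\mathcal{A}|}$, I would first observe via polar decomposition --- taking $T = \sum_\alpha\sqrt{\tilde\lambda_\alpha}\,|\chi_\alpha\rangle\langle e_\alpha|$ for an auxiliary orthonormal basis $\{e_\alpha\}$ of $\mathbb{C}^{|\mathcal{A}|}$, one has $TT^* = \Gamma_{\mathrm{P}}$ and $T^*T = \sqrt{\Lambda}\,G\,\sqrt{\Lambda} =: \hat\Gamma$ --- that $\Gamma_{\mathrm{P}}$ and $\hat\Gamma$ share the same nonzero spectrum, hence $S(\Gamma_{\mathrm{P}}) = S(\hat\Gamma)$.

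The key step is then the matrix inequality $G \leq B_1 I$ on $\mathbb{C}^{|\mathcal{A}|}$. Writing $c_\alpha = \|fD_z\psi_\alpha\|^2$ and $\tilde K_{\alpha\beta} = \langle\psi_\alpha|D_z^\dagger f^2 D_z|\psi_\beta\rangle$, we have $G = D^{-1/2}\tilde K D^{-1/2}$ with $D = \operatorname{diag}(c_\alpha)$. Because $0 \leq f^2 \leq I$ on Fock space and $\{\psi_\alpha\}_{\alpha\in\mathcal{A}}$ is orthonormal, a direct computation gives $\langle c|\tilde K|c\rangle = \|fD_z\sum_\alpha c_\alpha\psi_\alpha\|^2 \leq \|c\|^2$ for every $c\in\mathbb{C}^{|\mathcal{A}|}$, so $\tilde K \leq I$ and hence $G \leq D^{-1} \leq B_1 I$ using $c_\alpha \geq 1/B_1$ from Lemma~\ref{lem-norm-estimate}. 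This yields the operator inequality $\hat\Gamma \leq B_1\Lambda$.

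To conclude, apply operator monotonicity of $\ln$: $\ln\hat\Gamma \leq (\ln B_1)I + \ln\Lambda$. Multiplying by $\hat\Gamma\geq 0$, taking the trace, and using cyclicity together with $G_{\alpha\alpha}=1$ to evaluate
\[
\Tr\hat\Gamma\ln\Lambda \;=\; \Tr\bigl(G\,\Lambda\ln\Lambda\bigr) \;=\; \sum_\alpha \tilde\lambda_\alpha\ln\tilde\lambda_\alpha,
\]
one obtains $\Tr\hat\Gamma\ln\hat\Gamma \leq \ln B_1 + \sum_\alpha\tilde\lambda_\alpha\ln\tilde\lambda_\alpha$, which is the desired inequality. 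The only delicate point is that $\hat\Gamma$ may fail to be strictly positive if the $\chi_\alpha$ happen to be linearly dependent; this is handled by the standard regularization $\hat\Gamma \leadsto \hat\Gamma + \epsilon I$, applying operator monotonicity for strictly positive operators, and then letting $\epsilon \to 0^+$ with the convention $0\ln 0 = 0$. I expect this regularization to be the only technical obstacle, and it is entirely routine. As a sanity check, when $f \equiv I$ one has $B_1 = 1$, $G = I$, and $\hat\Gamma = \Lambda$, so the inequality becomes the expected equality $S(\Gamma_{\mathrm{P}}) = -\sum_\alpha\tilde\lambda_\alpha\ln\tilde\lambda_\alpha$.
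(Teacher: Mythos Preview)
Your argument is correct, but it proceeds along a different route than the paper's. The paper writes $\Gamma_{\mathrm{P}}=\sum_{\alpha}\tilde\lambda_\alpha P_\alpha$ with rank-one projections $P_\alpha=|\chi_\alpha\rangle\langle\chi_\alpha|$ and applies Jensen's inequality twice: first $-\Tr P_\alpha\ln(\tilde\lambda_\alpha^{-1}\Gamma_{\mathrm{P}})\geq-\ln\Tr P_\alpha\tilde\lambda_\alpha^{-1}\Gamma_{\mathrm{P}}$ (concavity of $\ln$ against the state $P_\alpha$), then concavity of $\ln$ against the probability weights $\tilde\lambda_\alpha$, yielding the bound $-\ln\|\sum_\alpha P_\alpha\|$. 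The estimate $\sum_\alpha P_\alpha\leq B_1$ follows from $f\leq 1$ and orthonormality of $D_z\psi_\alpha$, exactly the input you use to get $G\leq B_1 I$; indeed, $\sum_\alpha P_\alpha$ and your Gram matrix $G$ have the same nonzero spectrum. Your argument instead transfers the problem to $\mathbb{C}^{|\mathcal{A}|}$ via the polar identity $\mathrm{spec}(TT^*)\setminus\{0\}=\mathrm{spec}(T^*T)\setminus\{0\}$, and then invokes operator monotonicity of the logarithm rather than Jensen. The paper's version is a few lines shorter and avoids the regularization issue altogether; your version has the virtue of making the finite-dimensional structure explicit and would adapt more readily if one wanted sharper matrix inequalities on $\hat\Gamma$.
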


\begin{proof}
		The proof follows  \cite[Lemma~2]{Seiringer06} and for the reader's convenience we repeat it here. The state $\Gamma_{\mathrm{P}}$ is of the form  $ \Gamma_{\mathrm{P}} = \sum_{\alpha \in \mathcal{A}} \tilde \lambda_\alpha P_\alpha$ for rank one projections  $ \{ P_\alpha \}$  that are not necessarily mutually orthogonal. By the concavity of the logarithm we have 
		\begin{align*}
		S( \Gamma_{\mathrm{P}})  +\sum_{\alpha \in \mathcal{A}} \tilde \lambda_\alpha \ln \tilde \lambda_\alpha    &= - \sum_{\alpha \in \mathcal{A}} \tilde \lambda_\alpha \Tr_{\mathcal{F}} P_\alpha \ln \left( \tilde \lambda_\alpha^{-1}  \Gamma_{\mathrm{P}} \right)\\
		&\geq - \sum_{\alpha \in \mathcal{A}} \tilde \lambda_\alpha \ln \Tr_{\mathcal{F}} P_\alpha \tilde \lambda_\alpha^{-1}  \Gamma_{\mathrm{P}}\\
		&\geq - \ln \Tr_{\mathcal{F}} \left( \sum_{\alpha \in \mathcal{A}} P_\alpha \Gamma_{\mathrm{P}} \right) \geq - \ln \left\| \sum\nolimits_{\alpha \in \mathcal{A}} P_\alpha \right \|. \asterisknum
		\end{align*}
Since $f\leq 1$ and the functions  $D_z \psi_\alpha$ are orthonormal, we infer from Lemma~\ref{lem-norm-estimate} that
\begin{equation}
\sum_{\alpha \in \mathcal{A}} P_\alpha = \sum_{\alpha \in \mathcal{A}} \frac{\ket{f D_z \psi_\alpha}\bra{f D_z \psi_\alpha}}{\| f D_z \psi_\alpha \|^2} \leq B_1.
\end{equation}
This concludes the proof.
\end{proof}

\subsection{Final upper bound}

Now that we have an estimate on every term appearing in the free energy functional, we are ready to state the upper bound on the free energy. 
Inserting the explicit form of $\tilde\lambda_\alpha$ in \eqref{def:la} and \eqref{def:lat} and recalling that $\tilde n_{\mathrm{G}} =\sum_{\alpha \in \mathcal{A}} \tilde\lambda_\alpha N_\alpha$, we have 
\begin{align}\nonumber
\sum_{\alpha \in \mathcal{A}} \tilde\lambda_\alpha E_\alpha + \beta^{-1} \sum_{\alpha \in \mathcal{A}} \tilde \lambda_\alpha \ln \tilde \lambda_\alpha & = \mu\tilde n_{\mathrm{G}}  - \beta^{-1} \ln \left( \sum_{\alpha'\in \mathcal{A}} \e^{-\beta (E_{\alpha'} - \mu N_{\alpha'})}  \right) \\
& \leq \mu\tilde n_{\mathrm{G}}  - \beta^{-1} \ln \left( \sum_{\alpha'} \e^{-\beta (E_{\alpha'} - \mu N_{\alpha'})}  \right) + \beta^{-1} \ln B_2 \label{uex}
\end{align}
where we used Lemma~\ref{lem-number-estimate} in the last step. The free energy of the ideal Bose gas can alternatively be written as
\begin{equation}\label{altern}
- \frac 1 \beta \ln \left( \sum_{\alpha'} \e^{-\beta (E_{\alpha'} - \mu N_{\alpha'})}  \right)   = \frac 1 \beta \sum_{p \in (2\pi/\tilde \ell)\Zz^2} \ln \left( 1 - \e^{-\beta (p^2 - \mu)} \right).
\end{equation}
We can use Lemma~\ref{lem-approx-traces} to bound the last sum in terms of the corresponding integral, with the result that 
\begin{align*} \label{eqn-approx-discrete-f0}
\eqref{altern} & \leq \frac 1 \beta  \left( \frac{\tilde \ell}{2\pi} \right)^2  \int_{\mathbb{R}^2} \ln \left( 1 - \e^{-\beta (p^2 - \mu)} \right)  \de p   - \frac{\tilde \ell}{\beta  \pi^2} \int_{\Rr^2} \frac{1}{|p|} \ln \left( 1 - \e^{- \beta(p^2 - \mu)} \right) \de p  \\ & \leq  \frac 1 \beta  \left( \frac{\tilde \ell}{2\pi} \right)^2  \int_{\mathbb{R}^2} \ln \left( 1 - \e^{-\beta (p^2 - \mu)} \right)  \de p + C  \frac{\tilde \ell }{\beta^{3/2}} \asterisknum
\end{align*}
for some $C>0$, where we used $\mu<0$ in the last bound. In particular,
since 
\begin{equation} \label{eqn-def-ideal-gas-free-energy}
f_0(\beta,\rho) = \sup_{\mu \leq 0} \left\{ \mu \rho + \frac{1}{4 \pi^2 \beta} \int_{\Rr^2} \ln \left( 1 - \e^{- \beta(p^2 - \mu)} \right) \de p \right\}
\end{equation}
we obtain
\begin{equation}
\eqref{uex} \leq \tilde \ell^2  f_0(\beta, \tilde n_{\mathrm{G}} \tilde \ell^{-2}) + \beta^{-1} \ln B_2 +  C  \frac{\tilde \ell }{\beta^{3/2}}.
\end{equation}

Using Lemmas~\ref{lem-energy-estimate} and~\ref{lem-entropy-estimate}, we thus have the following upper bound on the free energy in finite volume of the trial state $\Gamma_{\mathrm{P}}$:
\begin{align*} \label{eqn-ub-finite-volume1}
 \Tr_{\mathcal{F}}  \left( \mathbb{H}_{\Lambda_{\tilde\ell}}^{\mathrm{per}} \Gamma_{\mathrm{P}}\right)    -  \frac 1 \beta S(\Gamma_{\mathrm{P}})    &\leq 
 \tilde \ell^2  f_0(\beta, \tilde n_{\mathrm{G}} \tilde \ell^{-2}) + \beta^{-1} \ln B_1 B_2 +  C  \frac{\tilde \ell }{\beta^{3/2}} \\
&\quad + \frac{B_1 B_2}{ \tilde \ell^{2}}  \left( (|z|^2 + n_{\mathrm{G}})^2 - \tfrac 12 |z|^4  \right)  \left( \frac{4\pi}{\ln(R/a)} +  \int_{|x|> R} v(|x|) \de x \right)  \\
&\quad +  24 \pi^2  B_1 B_2  \frac{(n_{\mathrm{G}} + |z|^2)^3}{\tilde \ell^4} \frac{R^2}{[\ln(R/a)]^2}. \asterisknum
\end{align*}
The last term in the second line can be bounded as in \eqref{244}. 
In combination with \eqref{eqn-upper-bound-periodic-b} this gives the final upper bound 
\begin{align*} \label{eqn-ub-finite-volume}
 f(\beta,\rho)    &\leq 
 (1- R_0/\ell -2 b/\ell)^2  f_0(\beta, \tilde n_{\mathrm{G}} \tilde \ell^{-2}) + \ell^{-2} \beta^{-1} \ln B_1 B_2 +   \frac{C}{\ell\beta^{3/2}} \\
&\quad + \frac{B_1 B_2}{ \ell^2 \tilde \ell^{2}}  \left( ( |z|^2 + n_{\mathrm{G}})^2 - \tfrac 12 |z|^4  \right)  \left( \frac{4\pi}{\ln(R/a)} +\frac{1}{[\ln(R/a)]^2 }  \int_{|x|> R} v(|x|) [\ln(|x|/a)]^2 \de x \right)  \\
&\quad +  24 \pi^2  B_1 B_2  \frac{(n_{\mathrm{G}} + |z|^2)^3}{\ell^2 \tilde \ell^4} \frac{R^2}{[\ln(R/a)]^2} + \frac{4 \rho }{b^2}  \nonumber  \\ & \quad     + \frac 12 \frac { \rho^2 (1 - R_0/\ell)^2}{ ( 1- R_0/\ell -2b/\ell)^{4}} \frac { 1} {[\ln (R_0/a)]^{2}}  \int_{|x|> R_0} v(|x|) [\ln(|x|/a)]^2 \de x  . \asterisknum
\end{align*}
We shall choose the parameters such that $\tilde n_{\mathrm{G}}\tilde \ell^{-2} \geq n_{\mathrm{G}} \ell^{-2} = \rho - \rho_{\mathrm{s}}$, hence $f_0(\beta, \tilde n_{\mathrm{G}} \tilde \ell^{-2})  \leq f_0(\beta,\rho-\rho_{\mathrm{s}})$. Note that $f_0(\beta,\rho -\rho_{\mathrm{s}}) \sim \beta^{-2}$ for $\beta \rho \gtrsim 1$.
 Moreover, we can use \eqref{nnt}  to give an upper bound on $n_{\mathrm{G}}$ in terms of $\tilde n_{\mathrm{G}}=n-|z|^2$.  It remains to choose the free parameters $\ell$, $b$, $R_0$, $R$ and $\mathcal{N}$.  
In order to estimate the error stemming from $B_2$ in \eqref{B2b}, we need bounds on the chemical potential $\mu$, which will be derived in the next section.

\subsection{Effective chemical potential}

From now on, we shall use the short hand notation
\begin{equation}\label{def:sigma}
\sigma := | \ln a^2 \rho |.
\end{equation}
Recall that the chemical potential $\mu$ was chosen such that 
\begin{equation}
n_{\mathrm{G}} = \sum_{p \in (2\pi/\tilde \ell)\Zz^2}  \frac 1{\e^{\beta(p^2 - \mu)}-1} = \rho\ell^2 \min\left\{ 1, \frac{\ln \sigma}{4\pi \beta \rho} \right\} 
\end{equation}
where the last fraction is nothing but $\beta_\mathrm{c}/\beta$. The trivial lower bound $n_{\mathrm{G}} \geq 1/(\e^{-\beta \mu}-1)$ implies that 
$-\beta \mu \gtrsim 1$ if $n_{\mathrm{G}}\lesssim 1$, and $-\beta \mu \gtrsim n_{\mathrm{G}}^{-1}$ if $n_{\mathrm{G}}\gtrsim 1$.

Let us further consider  the case $n_{\mathrm{G}}\gtrsim 1$. 
A more accurate lower bound on $n_{\mathrm{G}}$ can be obtained with the aid of Lemma~\ref{lem-approx-traces}. It implies that
\begin{equation}
n_{\mathrm{G}} \geq \frac{\tilde\ell^2}{4\pi^2} \int_{\mathbb{R}^2} \frac 1{\e^{\beta(p^2 -\mu)} - 1} \left( 1 - \frac 4{\tilde \ell |p|}\right) \de p \geq - \frac{\tilde \ell^2}{4\pi \beta} \ln\left( 1 - \e^{\beta \mu}\right) - \frac {C \tilde \ell}{\beta \sqrt{|\mu|}}.
\end{equation}
Using $-\beta\mu \gtrsim n_{\mathrm{G}}^{-1}$ on the last term, we obtain
\begin{equation}
 - \ln\left( 1 - \e^{\beta \mu}\right) \leq  \frac{4\pi \beta}{\tilde \ell^2} n_{\mathrm{G}} +  \frac{C \beta^{1/2}}{\tilde \ell}   n_{\mathrm{G}}^{1/2} \leq \frac{\ell^2}{\tilde \ell^2}\ln \sigma  +  C \frac{\ell}{\tilde \ell}  \left( \ln \sigma\right)^{1/2}.
\end{equation}
We will choose the parameters such that $\ell / \tilde \ell = 1+o(1)$ as $\sigma\to \infty$, hence $-\beta \mu \geq \sigma^{-1 + o(1)}$. 

In order to control the error term in \eqref{B2b}, we also need a bound on $\tau(\beta\mu,k) n_{\mathrm{G}}$ for some fixed $0<k<1$, say $k=1/2$. For bounded $\beta \mu$, $\tau(\beta\mu,k)$ is bounded, but as $\beta\mu\to -\infty$, it diverges as $\e^{-k\beta\mu}/(k\beta|\mu|)$. On the other hand, for $\beta|\mu| \gtrsim 1$, Lemma~\ref{lem-approx-traces} readily implies that 
\begin{equation}
n_{\mathrm{G}} \leq \frac{\tilde\ell^2}{4\pi^2} \int_{\mathbb{R}^2} \frac 1{\e^{\beta(p^2 -\mu)} - 1} \left( 1 + \frac 4{\tilde \ell |p|}\right) \de p + \frac 1{\e^{-\beta\mu}-1} \lesssim \left( \tilde\ell^2 \beta^{-1} + 1 \right) \e^{\beta \mu}
\end{equation}
and hence, in particular, $\tau(\beta\mu,k) n_{\mathrm{G}}$ is bounded above by $(\tilde\ell^2 \beta^{-1} + 1)$ for $\beta|\mu|\gtrsim 1$. Since $n_{\mathrm{G}} \leq n = \rho\ell^2$, we conclude that the bound
\begin{equation}\label{bound:tau}
\tau(\beta\mu,k) n_{\mathrm{G}} \lesssim 1 + \ell^2 \rho
\end{equation}
holds uniformly in $\beta\rho \gtrsim 1$ for fixed $0<k<1$.

\subsection{Choice of parameters}

We are now ready to choose the free parameters in our upper bound. Recall the definition \eqref{def:sigma}. We shall choose $R^2\rho < 1$, hence we can write
\begin{equation}\label{Rae}
\ln(R/a) = \frac{1}{2} \left( | \ln a^2 \rho | - | \ln R^2 \rho | \right) = \frac \sigma 2 \left( 1 - \frac {|\ln R^2\rho|}{\sigma} \right). 
\end{equation}
We shall choose $R$ such that $|\ln R^2\rho| \ll \sigma$. 

Let us start with the choice of $b$. The error terms involving $b$ are of the order 
\begin{equation}
\rho b^{-2} +  b \ell^{-1} \left( \beta^{-2} + \rho^2 \sigma^{-1} \right) \lesssim   \rho b^{-2} + \rho^2 b \ell^{-1} 
\end{equation}
for $\beta\rho \gtrsim 1$, 
which leads to the choice $b^3 \sim  \ell \rho^{-1} $, and hence an error of the order $\rho^2 (\ell^2\rho)^{-1/3}$. 
We shall choose $\mathcal{N} = A \rho \ell^2$ for some large enough $A$ (of order $1$) to be determined. The main error terms involving $\ell$ are thus, in addition to $\rho^2 (\ell^2\rho)^{-1/3}$,
\begin{equation}\label{ell2}
R^2 \rho^4 \sigma^{-1} \ell^2  \quad \text{and} \quad \beta^{-3/2} \ell^{-1}.
\end{equation} 
The most relevant term turns out to be the first one, leading to the choice $\ell^2 \rho = (R^2\rho)^{-3/4} \sigma^{3/4} $ and an error of the order 
\begin{equation}\label{R1}
\frac {\rho^2}{\sigma} ( R^2 \rho \sigma^3)^{1/4}  .
\end{equation}
The other main error terms involving $R$ are
\begin{equation}
\frac {\rho^2}{\sigma} \left( \frac {|\ln R^2 \rho|}\sigma  + R^2 \rho  + \frac {R^2 \rho  \sigma}{\beta \rho} \right)
\end{equation}
of which the first is the most relevant, the others being small compared to \eqref{R1}. We equate it with \eqref{R1}, leading to the choice 
$R^2\rho \sim  \sigma^{-7}$ 
and a resulting error term
\begin{equation}\label{main:error:term}
\frac {\rho^2}{\sigma^2 } \ln  \sigma .
\end{equation}
The only parameter left to choose is $R_0$, and we can take $R_0^2 \rho \sim \sigma^2$. 

Let us summarize the choice of parameters. We have 
\begin{equation}
R^2\rho \sim \sigma^{-7} \ , \quad \ell^2 \rho \sim \sigma^{6} \ , \quad b^2\rho \sim  R_0^2 \rho \sim  \sigma^2
\end{equation}
and $\mathcal{N} = A \sigma^6$ for suitable $A$ large enough. Let us now examine the various terms in \eqref{eqn-ub-finite-volume}. 
We have $b/\ell \sim R_0/\ell\sim \sigma^{-2}$, leading to an error of at most $\beta^{-2} \sigma^{-2} \lesssim \rho^2 \sigma^{-2}$ from the prefactor multiplying $f_0$. Since $|z|^2 \leq \rho\ell^2$, we have $B_1 = 1 + O(R^2 \rho \ell^2 \rho) = 1 + O(\sigma^{-1})$ from \eqref{def:B1}, hence $\ell^{-2} \beta^{-1} \ln B_1 \lesssim \rho^2 \sigma^{-7}$. 
For $B_2$ in \eqref{B2b}, we use that $-\beta \mu \geq \sigma^{-1 + o(1)}$, as argued in the previous section, as well as \eqref{bound:tau}. This implies that for an appropriate choice of $A>0$ we have $B_2 = 1 + O(\sigma^{-\infty})$, hence all error terms involving $B_2$ are negligible compared to \eqref{main:error:term}. 
Similarly, we can give an upper bound on $|z|^2 = n - \tilde n_{\mathrm{G}} =  \ell^2 \rho_{\mathrm{s}} + n_{\mathrm{G}}  - \tilde n_{\mathrm{G}}$. From \eqref{nnt} and  $-\beta \mu \geq \sigma^{-1 + o(1)}$ we conclude in fact that $n_{\mathrm{G}}  - \tilde n_{\mathrm{G}} \lesssim  O(\sigma^{-\infty})$. Note that this also implies  that $\tilde n_{\mathrm{G}}\tilde \ell^{-2}  \geq n_{\mathrm{G}} \ell^{-2}  = \rho - \rho_{\mathrm{s}}$ for large enough $\sigma$, as claimed after \eqref{eqn-ub-finite-volume}, at least as long as $n_{\mathrm{G}} \gtrsim O(\sigma^{-K})$ for some (arbitrary) $K>0$. This condition holds if $\beta \rho \lesssim \sigma^{K}$ for some $K>0$. For larger $\beta\rho$, we simply use that $f_0$ contributes at most $\beta^{-2} \lesssim \rho^2 \sigma^{-K}$ to the free energy, and is hence negligible for $K$ large enough. 

Using also \eqref{Rae}, we conclude from \eqref{eqn-ub-finite-volume} with this choice of parameters that
\begin{equation}
f(\beta,\rho) \leq f_0(\beta,\rho - \rho_{\mathrm{s}}) + \frac{4\pi} {\sigma} \left( 2  \rho^2 - \rho_{\mathrm{s}}^2 \right) + \frac{C \rho^2}{\sigma} \left( \frac{\ln \sigma}{\sigma} + \frac 1 \sigma \int_{|x|\geq a (C a^2 \rho)^{-1/2} \sigma^{-7/2}} v(|x|) [\ln(|x|/a)]^2 \de x\right)
\end{equation}
for some universal constant $C>0$ and $\sigma$ large. 
This concludes the proof of Theorem~\ref{thm-ub}.

\vspace{0.5cm}

\textit{Acknowledgments.} We thank Andreas Deuchert for helpful discussions. Financial support by the European Research Council (ERC) under the European Union's Horizon 2020 research and innovation programme (grant agreement No.~694227) is gratefully acknowledged.




\end{document}